\newtheorem{theorem}{Theorem}
\newtheorem{lem}{\bf Lemma} 
\newtheorem{corollary}{\bf Corollary}
\newtheorem{remark}{Remark}%
\definecolor{mypink}{rgb}{0.81, 0.1, 0.1}
\definecolor{myred}{rgb}{0.8, 0.1, 0.4}
\definecolor{myred1}{rgb}{0.8, 0.1, 0.2}
\definecolor{mygreen}{rgb}{0.1, 0.9, 0.05}
\definecolor{rbg}{RGB}{219, 48, 122}
\definecolor{mycyn}{cmyk}{0.9, 0.1, 0.4429, 0.1412}
\definecolor{mygray}{gray}{0.6}
\definecolor{plum}{rgb}{0.5, 0.3, 0.8}
\definecolor{purple}{rgb}{0.2, 0.1, 1}
\begin{document}

\begin{frontmatter}



\title{Nonlinear Observer Design in Discrete-time Systems: Incorporating LMI Relaxation Strategies} 

\author[label1]{Shivaraj Mohite\corref{cor1}}
\ead{shivaraj.mohite@univ-lorraine.fr}
\affiliation[label1]{organization={Mechanical and process engineering department, Rhineland-Palatinate Technical University of Kaiserslautern-Landau},
city={Kaiserslautern},
postcode={67663},
country={Germany}}


\begin{abstract}
This manuscript focuses on the $\mathcal{H}_\infty$ observer design for a class of nonlinear discrete systems under the presence of measurement noise or external disturbances. Two new Linear Matrix Inequality (LMI) conditions are developed in this method through the utilization of the reformulated Lipschitz property, a new variant of Young inequality and the well-known Linear Parameter Varying (LPV) approach. One of the key components of the proposed LMIs is the generalized matrix multipliers. The judicious use of these multipliers enables us to introduce more numbers of decision variables inside LMIs than the one illustrated in the literature. It aids in adding some extra degrees of freedom from a feasibility point of view, thus enhancing the LMI conditions. Thus, the established LMIs are less conservative than existing ones. Later on, the effectiveness of the developed LMIs and observer is highlighted through a numerical example and the application of state of charge (SoC) estimation in the Li-ion battery model.

\end{abstract}



\begin{keyword}
Nonlinear observer design, Lipschitz systems,
$\mathcal{H}_\infty$ criterion and 
Linear Matrix Inequalities (LMIs).
\end{keyword}

\end{frontmatter}



\section{Introduction}\label{sec 1}

Over the past three decades, the topic of observer design for dynamical systems has received a significant amount of interest from researchers of control system engineering. This is because the state variables are essential in the system analysis and feedback control design. Although many of these variables can be measured through sensors, some remain inaccessible due to the unavailability of sensors. Thus, the observers play a pivotal role in modern-day applications that assist in capturing real-time information of systems, for example, the state-of-charge estimation of LI-ion battery model~\cite{shen_2023_SoC_model}, autonomous vehicle tracking~\cite{zemouche2017circle},  and so on. 

In the literature, numerous approaches have been proposed for linear observer design, and all have shown high reliability.
However, in contrast to the linear observer, the development of nonlinear observers is an arduous problem. There is no systematic technique to construct these types of observers. Hence, an abundant amount of research has been carried out in this area. The several popular methods of nonlinear observers are as follows: 
\begin{enumerate}[a)]
    \item Transformation-based observers~\cite{transformation-based};
    \item High-gain observers~\cite{ahrens2009high};
    \item Sliding-mode observers~\cite{spurgeon2008sliding};
    \item Linear Matrix Inequality (LMI)-based observers~\cite{2012_Nonlinear_Discrete_LMI}.
\end{enumerate}

All these methodologies are developed for continuous-time nonlinear systems, while the few methodologies related to discrete-time systems are showcased in~\cite{hu2012h,ibrir2007circle_descrete, LMI_discrete_2012} and~\cite{2012_Nonlinear_Discrete_LMI}. 
Among these techniques, LMI-based techniques are extensively studied by researchers and some of them are provided in~\cite{zemouche2006observer,ibrir2007circle_descrete, abbaszadeh2009lmi} and~\cite{LMI_discrete_2012}. Recently, the authors of~\cite{Shiv_IFAC_new} had established a new matrix-multiplier-based LMI approach for continuous-time nonlinear observer design. The LMI condition presented in~\cite{Shiv_IFAC_new} is less conservative than the one shown in~\cite{zemouche2017circle}. It inspires the authors to derive a new LMI condition for the development of nonlinear observers for discrete-time systems through the utilization of new matrix multipliers, reformulated Lipschitz property and Young inequalities. Further, the effectiveness of the developed LMI condition is shown through a numerical example. The performance of the observer is validated through the application of State-of-charge (SoC) estimation of Li-ion battery.

The rest of the paper is structured as follows: Some preliminaries related to the nonlinear observer design and the notations are illustrated in Section \ref{sec 2 not and pre}. Further, Section \ref{sec 3 prob statement} encompasses the contextualization of the problem statement. Segment \ref{Sec 4 LMI formulation} contains the formulation of two new LMI conditions. Further, some comments and remarks related to the proposed methodology are outlined in Section~\ref{sec 5 comment}. The effectiveness of the new LMI conditions and the observer performance is showcased in Section \ref{sec 6 num examples}. Finally, Section \ref{sec 7 conclusion} entails a few concluding remarks on the established approach.

\section{Nomenclature and Prerequisites}
\label{sec 2 not and pre}
This section encompasses the illustration of denotation used in this paper. Later on, we recaptured some mathematical tools
related to nonlinear observer design.
\subsection{Notations}\label{sec 2.1 notes}
Through this paper, the ensuing terminologies are utilized:
The euclidean norm and the $\mathcal{L}_2$ norm of a vector $e$ are depicted by $||e||$ and $||e||_{\mathcal{L}_2}$, respectively. The term $e_0$ denotes the initial values of $e(t)$ at $t=0$.
A vector of the canonical basis of $\mathbb{R}^{s}$ is illustrated as: $$e_s(i) = (\underbrace{0,\hdots,0,\overbrace{1}^{i^{\text{th}}}, 0, \hdots, 0}_{s \,\, \text{components}})^\top \in \mathbb{R}^{s},~s\geq 1.$$
The identity matrix and the null matrix are represented by $\mathbb{I}$ and $\mathbb{O}$, respectively.
The transpose of matrix $A$ is symbolised as $A^\top$, while, $A \in \mathbf{S}^{n}$ infers that $A$ is a symmetric matrix of dimension $n \times n$. The repeated blocks within a symmetric matrix are showcased by the symbol $(\star)$. 
$\lambda_{\min}(A) $ and $\lambda_{\max}(A) $ depict minimum and maximum eigenvalues of $A \in \mathbf{S}^{n}$, respectively.
For the aforementioned matrix $A$, $A > 0$ ($A < 0$) indicates that it is a positive definite matrix (a negative definite matrix). Similarly, a positive semi-definite matrix (a negative semi-definite matrix) is showcased by $A \geq 0$ ($A \leq 0$).  A block-diagonal matrix having elements $A_1, \hdots, A_n$ in the diagonal is described as $A = \text{block-diag}(A_1, \hdots,A_n)$.   

\subsection{Preliminaries}\label{sec 2.2 pre}
This segment presents an overview of the mathematical tools and background results which will be needed in the development of the LMI conditions.

\begin{lem}[Reformulated Lipschitz property\cite{zemouche2013lmi}]\label{Lem 1}
Let us consider a global Lipschitz nonlinear function $h: \mathbb{R}^n \to \mathbb{R}^n$. Then, for all $i,j \in \{1,\hdots,n\}$, there exists functions $h_{ij} : \mathbb{R}^n \times \mathbb{R}^n \to \mathbb{R}$ such that $\forall\,\Psi,\,\Phi \in \mathbb{R}^n,$
\begin{equation}
\begin{split}
 h(\Psi)-h(\Phi)&= \sum^{n}_{i=1}
       \sum^{n}_{j=1}h_{ij} \mathcal{H}_{ij} (\Psi-\Phi)\label{L 2.2}, 
\end{split}
    \end{equation}
where $\mathcal{H}_{ij} = e_n(i)e^\top_n(j)$ and $h_{ij}  \triangleq h_{ij}(\Psi^{\Phi_{j-1}},\Psi^{\Phi_{j}})$. Additionally, the functions $h_{ij}$ hold
 \begin{equation}
       h_{{{ij}_{\min}}}\leq
      h_{ij}  \leq
    h_{{{ij}_{\max}}},\label{L 2.3}
    \end{equation}
where $h_{{{ij}_{\min}}}$ and $h_{{{ij}_{\max}}}$ are constants.
\end{lem}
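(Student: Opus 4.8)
The plan is to prove the identity \eqref{L 2.2} by a telescoping argument along a chain of intermediate points that interpolate between $\Phi$ and $\Psi$ one coordinate at a time, and then to read the bounds \eqref{L 2.3} off the global Lipschitz constant of $h$. Concretely, for $k=0,1,\dots,n$ I would introduce the intermediate vector $\Psi^{\Phi_k}\in\mathbb{R}^n$ whose first $k$ entries are those of $\Phi$ and whose remaining $n-k$ entries are those of $\Psi$, so that $\Psi^{\Phi_0}=\Psi$, $\Psi^{\Phi_n}=\Phi$, and the consecutive vectors $\Psi^{\Phi_{j-1}}$ and $\Psi^{\Phi_j}$ differ only in the $j$-th coordinate (the entry $\Psi_j$ being replaced by $\Phi_j$). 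Telescoping then yields
\[
h(\Psi)-h(\Phi)=\sum_{j=1}^{n}\Big(h\big(\Psi^{\Phi_{j-1}}\big)-h\big(\Psi^{\Phi_j}\big)\Big).
\]

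Next, for each pair $(i,j)$ I would set
\[
h_{ij}\triangleq h_{ij}\big(\Psi^{\Phi_{j-1}},\Psi^{\Phi_j}\big):=\frac{e_n^\top(i)\big(h(\Psi^{\Phi_{j-1}})-h(\Psi^{\Phi_j})\big)}{\Psi_j-\Phi_j}
\]
when $\Psi_j\neq\Phi_j$, and $h_{ij}:=0$ otherwise; this is consistent, since when $\Psi_j=\Phi_j$ the two intermediate points coincide and the numerator vanishes as well. Using $\Psi_j-\Phi_j=e_n^\top(j)(\Psi-\Phi)$ together with the canonical-basis expansion $h(\Psi^{\Phi_{j-1}})-h(\Psi^{\Phi_j})=\sum_{i=1}^{n}e_n(i)\,e_n^\top(i)\big(h(\Psi^{\Phi_{j-1}})-h(\Psi^{\Phi_j})\big)$, the telescoped sum collapses to $\sum_{i=1}^{n}\sum_{j=1}^{n}h_{ij}\,e_n(i)e_n^\top(j)(\Psi-\Phi)$, which is exactly \eqref{L 2.2} with $\mathcal{H}_{ij}=e_n(i)e_n^\top(j)$.

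For the bounds \eqref{L 2.3}, note that since $\Psi^{\Phi_{j-1}}$ and $\Psi^{\Phi_j}$ differ only in coordinate $j$ we have $\|\Psi^{\Phi_{j-1}}-\Psi^{\Phi_j}\|=|\Psi_j-\Phi_j|$; hence, if $\gamma$ is the global Lipschitz constant of $h$, then $\big|e_n^\top(i)\big(h(\Psi^{\Phi_{j-1}})-h(\Psi^{\Phi_j})\big)\big|\le\|h(\Psi^{\Phi_{j-1}})-h(\Psi^{\Phi_j})\|\le\gamma\,|\Psi_j-\Phi_j|$, so $|h_{ij}|\le\gamma$ and one may take $h_{ij_{\min}}=-\gamma$ and $h_{ij_{\max}}=\gamma$ (sharper componentwise constants follow if the partial Lipschitz constants of $h$ are known, and if $h\in\mathcal{C}^1$ the mean value theorem identifies $h_{ij}$ with a value of $\partial h_i/\partial x_j$). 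I expect the main obstacle to be purely organizational rather than conceptual: arranging the one-coordinate-at-a-time interpolation so that consecutive points differ in exactly one entry, and treating the degenerate case $\Psi_j=\Phi_j$ carefully so that $h_{ij}$ is a genuine function of the pair $(\Psi^{\Phi_{j-1}},\Psi^{\Phi_j})$. Once the chain is set up correctly, both the identity and the bounds follow immediately.
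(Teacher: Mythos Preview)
Your telescoping argument is correct and is precisely the standard proof of this reformulated Lipschitz property; the paper itself does not supply a proof but merely cites \cite{zemouche2013lmi}, where exactly this one-coordinate-at-a-time interpolation and difference-quotient construction is used. So your approach coincides with the intended one, including the handling of the degenerate case $\Psi_j=\Phi_j$ and the extraction of the bounds $h_{ij_{\min}},h_{ij_{\max}}$ from the global (or componentwise) Lipschitz constant.
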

\begin{lem}[Young's inequalities]\label{Lem 2}
For any two vectors $X, Y \in \mathbb{R}^n$ and a matrix $Z >0 \in \mathbf{S}^{n}$, the ensuing inequalities are true:  
\begin{equation}\label{L 3.1}
    X^\top Y+Y^\top X \leq X^\top Z^{-1}X + Y^\top ZY,
\end{equation}
and
\begin{equation}\label{L 3.2}
  X^\top Y+Y^\top X \leq (X+ZY)^\top (2Z)^{-1}(X+ZY). 
\end{equation}
The inequality~\eqref{L 3.1} is known as the standard Young's inequality. However, the authors of~\cite{zemouche2017circle} presented a variant of Young's inequality, represented by~\eqref{L 3.2}.
\end{lem}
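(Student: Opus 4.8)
The plan is to establish both bounds by the classical \emph{completing-the-square} argument, using that a symmetric positive definite $Z$ admits a symmetric positive definite square root $Z^{1/2}$, so that $Z^{1/2}Z^{1/2}=Z$ and $Z^{-1/2}:=(Z^{1/2})^{-1}$ is symmetric with $Z^{-1/2}Z^{-1/2}=Z^{-1}$. The only fact doing real work is that a quadratic form built from a real vector is nonnegative; here the relevant vector is $Z^{-1/2}X-Z^{1/2}Y$.

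For~\eqref{L 3.1} I would start from
\begin{equation}
0\le\bigl(Z^{-1/2}X-Z^{1/2}Y\bigr)^{\!\top}\bigl(Z^{-1/2}X-Z^{1/2}Y\bigr),
\end{equation}
and expand the product. Using $(Z^{-1/2})^{\!\top}Z^{-1/2}=Z^{-1}$, $(Z^{1/2})^{\!\top}Z^{1/2}=Z$, and $(Z^{-1/2}X)^{\!\top}(Z^{1/2}Y)=X^{\top}Z^{-1/2}Z^{1/2}Y=X^{\top}Y$ (by symmetry of $Z^{\pm1/2}$), the square terms contribute $X^{\top}Z^{-1}X+Y^{\top}ZY$ and the cross terms contribute $-(X^{\top}Y+Y^{\top}X)$; rearranging gives exactly $X^{\top}Y+Y^{\top}X\le X^{\top}Z^{-1}X+Y^{\top}ZY$.

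For~\eqref{L 3.2} I would simply expand the right-hand side. Since $(2Z)^{-1}=\tfrac12 Z^{-1}$ and $Z^{-1}Z=ZZ^{-1}=\mathbb{I}$,
\begin{equation}
(X+ZY)^{\top}(2Z)^{-1}(X+ZY)=\tfrac12\bigl(X^{\top}Z^{-1}X+X^{\top}Y+Y^{\top}X+Y^{\top}ZY\bigr),
\end{equation}
so that, after subtracting $X^{\top}Y+Y^{\top}X$ from both sides, \eqref{L 3.2} is equivalent to $0\le\tfrac12\bigl(X^{\top}Z^{-1}X-X^{\top}Y-Y^{\top}X+Y^{\top}ZY\bigr)$, which is precisely the nonnegativity invoked above. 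Thus \eqref{L 3.2} is no more than an algebraic recasting of \eqref{L 3.1} and needs no separate argument.

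There is no genuine obstacle here: the statement is the finite-dimensional, matrix-weighted version of Young's inequality, and the only points to watch are the transpose bookkeeping and the requirement that $Z>0$ strictly (so that $Z^{-1/2}$ and $Z^{-1}$ exist). If one prefers to avoid the square root entirely, the same conclusion follows by observing that the block matrix $\left[\begin{smallmatrix}Z^{-1}&\mathbb{I}\\\mathbb{I}&Z\end{smallmatrix}\right]\ge 0$ — its Schur complement with respect to the $(2,2)$-block $Z$ is $Z^{-1}-Z^{-1}=\mathbb{O}$ — and then evaluating the associated quadratic form at the stacked vector $\bigl(X^{\top},\,-Y^{\top}\bigr)^{\top}$; I would include this as a brief remark.
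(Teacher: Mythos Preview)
Your argument is correct in every detail: the completing-the-square with $Z^{\pm 1/2}$ yields~\eqref{L 3.1} directly, and your expansion shows that~\eqref{L 3.2} is algebraically equivalent to~\eqref{L 3.1}. The paper itself does not prove Lemma~\ref{Lem 2}; it is stated as a standard preliminary result, with~\eqref{L 3.2} attributed to~\cite{zemouche2017circle}, so there is no paper proof to compare against and your write-up would serve perfectly well as one.
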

\begin{lem}[~\cite{Shiv_IFAC_new}]\label{Lem 3}
 Let us consider 
\begin{align}
\mathbb{X}^\top &= \begin{bmatrix}
a_1\mathbb{I}_n & a_2\mathbb{I}_n & \hdots & a_n\mathbb{I}_n
\end{bmatrix}, \label{Lem3 1}\\
\mathbb{Y}^\top &= \begin{bmatrix}
b_1\mathbb{I}_n & b_2\mathbb{I}_n & \hdots &b_n\mathbb{I}_n
\end{bmatrix},\label{Lem3 2}
\end{align}
along with
\begin{equation}\label{Lem3 3}
    Z = \begin{bmatrix}
Z_1 & Z_{a^2_1} & \hdots& Z_{a^n_1}\\
\star &  Z_2 & \hdots &  Z_{a^n_2}\\
\star & \star & \ddots & \vdots\\
\star & \star & \hdots  & Z_n\\
\end{bmatrix},
\end{equation}
where $0 \leq a_i \leq b_i~\forall i \in \{ 1,\hdots,n \}$ and $Z_i> 0 \in\mathbf{S}^{{n}},~Z_{a^j_i} \geq 0 \in\mathbf{S}^{{n}}$ $\forall \,\,i \in \{ 1,\hdots,n \}$ so that $Z>0$. Then, the subsequent inequality is fulfilled:
\begin{equation}\label{Lem3 4}
    \mathbb{X}^\top Z\mathbb{X} - \mathbb{Y}^\top Z \mathbb{Y} \leq 0. 
\end{equation}
\end{lem}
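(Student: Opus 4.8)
The strategy is elementary: expand each of the congruence-type products $\mathbb{X}^\top Z\mathbb{X}$ and $\mathbb{Y}^\top Z\mathbb{Y}$ into a sum over the blocks of $Z$, subtract, and then check that every resulting term is negative semi-definite thanks to the ordering $0\le a_i\le b_i$ and the semi-definiteness of the blocks of $Z$.

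First I would compute $Z\mathbb{X}$ block-wise. Writing $Z_{ij}$ for the $(i,j)$ block of $Z$ (so $Z_{ii}=Z_i$, and for $i<j$, $Z_{ij}=Z_{a^j_i}$ while $Z_{ji}=Z_{a^j_i}^\top=Z_{a^j_i}$ since $Z\in\mathbf{S}^{n^2}$ and $Z_{a^j_i}\in\mathbf{S}^{n}$), and recalling from \eqref{Lem3 1} that $\mathbb{X}$ stacks the blocks $a_j\mathbb{I}_n$, the $i$-th block-row of $Z\mathbb{X}$ is $\sum_{j=1}^{n}a_j Z_{ij}$. Left-multiplying by $\mathbb{X}^\top$ and collecting the symmetric off-diagonal contributions gives
\[
\mathbb{X}^\top Z\mathbb{X}=\sum_{i=1}^{n}a_i^2 Z_i+\sum_{1\le i<j\le n}2\,a_i a_j Z_{a^j_i},
\]
and, in exactly the same way using \eqref{Lem3 2},
\[
\mathbb{Y}^\top Z\mathbb{Y}=\sum_{i=1}^{n}b_i^2 Z_i+\sum_{1\le i<j\le n}2\,b_i b_j Z_{a^j_i}.
\]

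Subtracting, I obtain
\[
\mathbb{X}^\top Z\mathbb{X}-\mathbb{Y}^\top Z\mathbb{Y}=\sum_{i=1}^{n}\bigl(a_i^2-b_i^2\bigr)Z_i+\sum_{1\le i<j\le n}2\bigl(a_i a_j-b_i b_j\bigr)Z_{a^j_i}.
\]
Now I invoke the hypotheses. From $0\le a_i\le b_i$ it follows that $a_i^2\le b_i^2$, hence $(a_i^2-b_i^2)Z_i\le 0$ because $Z_i>0$. From $0\le a_i\le b_i$ and $0\le a_j\le b_j$ it follows that $a_i a_j\le b_i a_j\le b_i b_j$, hence $2(a_i a_j-b_i b_j)Z_{a^j_i}\le 0$ because $Z_{a^j_i}\ge 0$. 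Thus the right-hand side is a finite sum of negative semi-definite matrices, so it is negative semi-definite, which is precisely \eqref{Lem3 4}.

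The only real difficulty here is bookkeeping: writing the block quadratic form correctly and keeping track of the factor $2$ coming from the pair of symmetric off-diagonal blocks $Z_{a^j_i}$ and $Z_{a^j_i}^\top$. It is also worth noting that global positive-definiteness of $Z$ is not actually used — only the per-block conditions $Z_i>0$ (indeed $Z_i\ge 0$ would do) and $Z_{a^j_i}\ge 0$ — but since the stronger hypothesis is what the LMI construction will supply, this does not affect the argument.
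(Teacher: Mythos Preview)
Your argument is correct: the block expansion gives $\mathbb{X}^\top Z\mathbb{X}-\mathbb{Y}^\top Z\mathbb{Y}=\sum_i(a_i^2-b_i^2)Z_i+2\sum_{i<j}(a_ia_j-b_ib_j)Z_{a^j_i}$, and each summand is negative semi-definite under the stated hypotheses. The paper does not actually prove Lemma~\ref{Lem 3} itself --- it simply refers the reader to~\cite{Shiv_IFAC_new} --- so there is no in-paper proof to compare against; your direct computation is the natural elementary argument and, as you observe, uses only the blockwise sign conditions rather than the global $Z>0$.
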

For the proof of Lemma~\ref{Lem 3}, one can refer~\cite{Shiv_IFAC_new}.   

\begin{lem}\label{Lem 4}
For any given matrices $A, B$ and $C$ of appropriate dimension, the subsequent equality is true:
\begin{equation}\label{L 3.3}
    \begin{bmatrix}
        A^\top C A&A^\top C B\\
B^\top C A& B^\top P B
    \end{bmatrix}=\begin{bmatrix}
       A&B
    \end{bmatrix}^\top C\begin{bmatrix}
       A&B
    \end{bmatrix}
\end{equation}
\end{lem}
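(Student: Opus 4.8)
The plan is to prove the identity by direct block-matrix multiplication on the right-hand side; no auxiliary results are needed. First I would record the compatibility requirements implicit in the statement: $A$ and $B$ must have the same number of rows, say $m$, so that the horizontal concatenation $\begin{bmatrix} A & B \end{bmatrix}$ is well-defined, and $C$ must be an $m\times m$ matrix so that the triple product $\begin{bmatrix} A & B \end{bmatrix}^\top C \begin{bmatrix} A & B \end{bmatrix}$ is conformable. (I would also point out that the lower-right block on the left-hand side is meant to read $B^\top C B$; the symbol $P$ appearing there in~\eqref{L 3.3} is a typographical slip for $C$.)

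Then I would simply expand, using the rule for multiplying partitioned matrices with conformable blocks. Writing $M = \begin{bmatrix} A & B \end{bmatrix}$ so that $M^\top = \begin{bmatrix} A^\top \\ B^\top \end{bmatrix}$, we get
\begin{equation*}
M^\top C M = \begin{bmatrix} A^\top \\ B^\top \end{bmatrix} C \begin{bmatrix} A & B \end{bmatrix} = \begin{bmatrix} A^\top C \\ B^\top C \end{bmatrix}\begin{bmatrix} A & B \end{bmatrix} = \begin{bmatrix} A^\top C A & A^\top C B \\ B^\top C A & B^\top C B \end{bmatrix},
\end{equation*}
which is precisely the left-hand side of~\eqref{L 3.3}. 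This closes the argument.

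There is no genuine obstacle here: the claim is the elementary fact that a congruence-type transformation respects block partitioning, and the only thing requiring care is the bookkeeping of block dimensions so that every indicated product is defined. I would present the whole proof in two or three lines.
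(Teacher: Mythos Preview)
Your proof is correct and follows exactly the approach indicated in the paper, which simply remarks that the result is obtained by direct matrix multiplication. Your observation that the $P$ in the lower-right block is a typo for $C$ is also correct.
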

The proof of this lemma is very straightforward. Through matrix multiplication, one can easily derive~\eqref{L 3.3}.

\section{Problem statement}\label{sec 3 prob statement}
Let us consider that the following sets of equations represent a class of disturbance-affected nonlinear system dynamics with nonlinear output:
\begin{align}\label{eq 1}
\begin{split}
     x_{k+1}  &= A x_k + G f(x_k)+B_1 u_k+E \omega_k, \\
     y_k       &= C x_k+ F g(x_k)+B_2 u_k+D\omega_k, 
\end{split}
\end{align}
where $x \in \mathbb{R}^n$ and $y \in \mathbb{R}^p$ denote a state vector of the system, and its output, respectively. The input provided to the system is illustrated by $u\in\mathbb{R}^s$. However, $\omega\in \mathbb{R}^q$ depicts $\mathcal{L}_2$ bounded noise/disturbance vectors affecting the system dynamics and measurements. The matrices $A \in \mathbb{R}^{n \times n}$, $G \in \mathbb{R}^{n \times m}$, $B_1\in \mathbb{R}^{n \times s},\,B_2\in \mathbb{R}^{p \times s}$, $C \in R^{p \times n} $, $E\in \mathbb{R}^{n \times q}$, $D\in \mathbb{R}^{p \times q}$ 
 and $F\in \mathbb{R}^{p \times r}$ are known and constant. The functions $f(\cdot): \mathbb{R}^n \to \mathbb{R}^m$ and $g(\cdot): \mathbb{R}^n \to \mathbb{R}^r$ are the nonlinearities present in the dynamics and outputs of the system, respectively. Both functions are presumed to be globally Lipschitz. Further, we have expressed $f(\cdot)$ and $g(\cdot)$ in the ensuing manner:
 \begin{subequations}
  \begin{equation}\label{eq 2 f}
f(x_k) = \begin{bmatrix}
 f_1( F_1 x_k) \\\vdots
 \\f_i (\underbrace{F_i x_k}_{\theta_i})\\
\vdots \\
f_m (\theta_m) \end{bmatrix}, 
\end{equation}
  \begin{equation}\label{eq 2 g}
g(x_k) = \begin{bmatrix}
 g_1( G_1 x_k) \\
\vdots \\f_i (\underbrace{G_i x_k}_{\nu_i})\\
\vdots \\
g_r (\nu_r) 
\end{bmatrix},
\end{equation}
\end{subequations}
where $F_i \in \mathbb{R}^{\bar{n} \times n}~\forall~i\in\{1,\hdots ,m\}$ and $G_i \in \mathbb{R}^{\bar{p} \times n}~\forall~i\in\{1,\hdots ,r\}$.
\begin{remark}\label{Rem 1 noise}
If system dynamics and output of~\eqref{eq 1} are influenced by two different noises, $\omega_1$ and $\omega_2$, through $E_1$ and $D_1$ respectively, then the system can be rewritten in the form of~\eqref{eq 1} by considering matrices $E = \begin{bmatrix}E_1 & \mathbb{O}\end{bmatrix},~D = \begin{bmatrix}\mathbb{O} & D_1\end{bmatrix},$ and the noise vector $\omega = \begin{bmatrix}\omega_1\\\omega_2\end{bmatrix}$.    
\end{remark}

For the state estimation purposes of the system~\eqref{eq 1}, we have deployed the subsequent Luenberger observer form:
\begin{equation}\label{eq 3}
\begin{split}
   \hat{x}_{k+1}=A\hat{x}_k+G f(\hat{x}_k)+B_1 u_k+L\big(y_k-C\hat{x}_k-B_2 u_k -F g(\hat{x}_k)\big),
\end{split}
\end{equation}
where $\hat{x}_k$ and $L \in  \mathbb{R}^{n \times p}$ are the estimated states and the observer gain matrix, respectively. The estimation error of the proposed observer~\eqref{eq 3} is defined as $$e_k={x}_{k}-\hat{x}_{k}.$$ 
Thus, from~\eqref{eq 1} and~\eqref{eq 3}, the estimation error dynamic is computed and expressed as:
\begin{equation}\label{eq 4}
\begin{split}
  e_{k+1}&=(A-LC)e_k+G \big(f(x_k)-f(\hat{x}_k)\big)
  -LF\big(g(x_k)-g(\hat{x}_ k)\big)+(E-LD)\omega_k.  
\end{split}
 \end{equation}

Since $f(\cdot)$ and $g(\cdot)$ are globally Lipschitz, the implementation of Lemma~\ref{Lem 1} on the terms $(f(x_k)-f(\hat{x}_k))$ and $(g(x_k)-g(\hat{x}_k))$ yields:
\begin{enumerate}[1)]
    \item There exist functions $f_{ij} : \mathbb{R}^{\bar{n}} \times \mathbb{R}^{\bar{n}} \to \mathbb{R}$, $g_{ij} : \mathbb{R}^{\bar{p}} \times \mathbb{R}^{\bar{p}} \to \mathbb{R}$ such that
\begin{subequations}\label{eq 6}
     \begin{equation}\label{eq 6 f_tilde}
 f(x_k)-f(\hat{x}_k)=\sum_{i,j=1}^{m,\bar{n}}   f_{ij}\mathcal{H}_{ij}F_i e_k, \end{equation}
\begin{equation}\label{eq 6 g_tilde}
  g(x_k)-g(\hat{x}_k)=\sum_{i,j=1}^{r,\bar{p}}   g_{ij}\mathcal{G}_{ij}G_i e_k, 
\end{equation}
 \end{subequations}
where $f_{ij} \triangleq f_{ij}(\theta_i^{\hat{\theta}_{i,j-1}},\theta_i^{\hat{\theta}_{i,j}})$ and $g_{ij} \triangleq g_{ij}(\nu_i^{\hat{\nu}_{i,j-1}},\nu_i^{\hat{\nu}_{i,j}})$. 
\item The functions $f_{ij}$ and $g_{ij}$ satisfy
\begin{align}
    {f}_{a_{ij}} &\leq  f_{ij}\leq {f}_{b_{ij}},\label{eq 7.1 f_tilde}\\
    {g}_{a_{ij}} &\leq  g_{ij}\leq {g}_{b_{ij}},\label{eq 7.1 g_tilde}
\end{align}
where ${f}_{a_{ij}}$, ${f}_{b_{ij}}$, ${g}_{a_{ij}}$, and ${g}_{b_{ij}}$ are known constants.
\end{enumerate}
Without loss of generality, let us presume that ${f}_{a_{ij}}=0$ and ${g}_{a_{ij}}=0$. Thus, the inequalities~\eqref{eq 7.1 f_tilde} and~\eqref{eq 7.1 g_tilde} are reformulated as:
\begin{align}
    0 &\leq  f_{ij}\leq {f}_{b_{ij}},\label{eq 7 f_tilde}\\
    0 &\leq  g_{ij}\leq {g}_{b_{ij}}.\label{eq 7 g_tilde}
\end{align}
One can refer~\cite{zemouche2013lmi} for additional information about this.

By incorporating~\eqref{eq 6} into~\eqref{eq 4}, the error dynamic~\eqref{eq 4} is reformulated as:
\begin{equation}\label{eq 8}
\begin{split}
  e_{k+1}&=\underbrace{\bigg((A-LC)+\sum_{i,j=1}^{m,\bar{n}}   f_{ij}G\mathcal{H}_{ij}H_i-\sum_{i,j=1}^{r,\bar{p}}   g_{ij}LF\mathcal{G}_{ij}G_i\bigg)}_{\mathbb{A}} e_k+\underbrace{\big(E-LD\big)}_{\mathbb{E}} \omega_k.
\end{split}
 \end{equation}
\begin{remark}\label{rem 1}
In various practical applications, it is possible to have  $f_{a_{ij}},\,g_{a_{ij}}  \neq 0$. In such cases, the system~\eqref{eq 8} is rewritten as
\begin{equation}\nonumber
\begin{split}
e_{k+1}&=\bigg(\underbrace{
A-LC+\sum_{i,j=1}^{m,\bar{n}}f_{a_{ij}}G\mathcal{H}_{ij}H_i-\sum_{i,j=1}^{r,\bar{p}} g_{a_{ij}}L F\mathcal{G}_{ij}G_i}_{\Tilde{{A}}}
\bigg)e_k \\&+  \bigg(\sum_{i,j=1}^{m,\bar{n}} \underbrace{(f_{ij}-f_{a_{ij}})}_{\tilde{f}_{ij}} G\mathcal{H}_{ij}H_i \Tilde{x}-\sum_{i,j=1}^{r,\bar{p}} \underbrace{(g_{ij}-g_{a_{ij}})}_{\tilde{g}_{ij}} L F\mathcal{G}_{ij}G_i \bigg)e_k+\mathbb{E}\omega_k.
\end{split}
\end{equation}
It yields:
\begin{equation}\label{eq 8.1}
\begin{split}
e_{k+1}= \bigg(\underbrace{\Tilde{{A}}+\sum_{i,j=1}^{m,\bar{n}} {\tilde{f}_{ij}} G\mathcal{H}_{ij}H_i \Tilde{x}- \sum_{i,j=1}^{r,\bar{p}} {\tilde{g}_{ij}} L F\mathcal{G}_{ij}G_i}_{\Tilde{\mathbb{A}}} \bigg)e_k+\mathbb{E}\omega_k.
\end{split}
\end{equation}
For the error dynamic~\eqref{eq 8.1}, the functions
$\tilde{f}_{ij}$ and $ \tilde{g}_{ij}$ hold~\eqref{eq 7 f_tilde} and~\eqref{eq 7 g_tilde}, respectively. It is easy to notice that both forms~\eqref{eq 8} and~\eqref{eq 8.1} are analogous. 
\end{remark}
The objective of the proposed methodology is to estimate the gain matrix $L$ so that
\begin{enumerate}
    \item The estimation error dynamic~\eqref{eq 8} is asymptotically stable in the absence of the disturbances/noise, i.e., at $\omega=0$.
    \item When $\omega\neq0$, the closed-loop system~\eqref{eq 8} fulfills the ensuing  $\mathcal{H}_\infty$ criterion:
\begin{equation}\label{eq H criterion 1}
        ||e||_{\mathcal{L}_2}  \leq \sqrt{\mu ||\omega||^2_{\mathcal{L}_2}+\nu ||e_0||^2},
\end{equation}
where the positive scalar $\sqrt{\mu}$ is known as the noise attenuation level.
\end{enumerate}

The aforementioned problem statement has garnered a significant amount of interest from researchers in the domain of control system engineering, resulting in the establishment of numerous LMI-based methods, for example,~\cite{ibrir2007circle_descrete},~\cite{Khadilja_2019_rnc},~\cite{chu2018_SMO}, and so on. The LMI conditions provided by each of these methods are based on several mathematical tools, such as the Schur lemma and the Young inequality. Though all of these techniques yield less conservative LMI conditions, there is a scope for further enhancements. In the sequel, through the exploration of Lemma~\ref{Lem 2}, Lemma~\ref{Lem 3} and newly defined matrix multipliers, two new LMI criteria are derived.
\section{Main result}\label{Sec 4 LMI formulation}
This section of the manuscript is devoted to the formulation of the $\mathcal{H}_\infty$ criterion-based LMI conditions which ensures the asymptotic stability of the error dynamic~\eqref{eq 8}.

For the stability analysis of the system~\eqref{eq 8}, the following quadratic Lyapunov function is used:
\begin{equation}\label{Ly fun}
 V(e_k) = e_k^\top P e_k,~\text{where}~P> 0\in \mathbf{S}^{n}   
\end{equation}
Let us consider $$\Delta V_k=V(e_{k+1})-V(e_k).$$ 
Through the utilization of~\eqref{eq 8} and~\eqref{Ly fun}, one can obtain:
\begin{equation}\label{sec2 del VK}
    \begin{split}
\Delta V_k&=e_k^\top\Big(-P+\mathbb{A}^\top P \mathbb{A}
\Big)e_k+e_k^\top\Big(\mathbb{A}^\top P \mathbb{E}\Big)\omega_k
+\omega_k^\top \Big(\mathbb{E}^\top P\mathbb{A}\Big)e_k
\\&+\omega_k^\top \mathbb{E}^\top P \mathbb{E} \omega_k.
    \end{split}
\end{equation}
According to~\cite{Khadilja_2019_rnc}, the  $\mathcal{H}_\infty$ criterion~\eqref{eq H criterion 1} is satisfied if the ensuing inequality is true:
\begin{equation}\label{eq H criterion 2}
    \mathcal{W}_k \triangleq \Delta V_k+ ||e_k||^2-\mu ||\omega_k||^2\leq 0.
\end{equation}
From~\eqref{sec2 del VK}, the inequality~\eqref{eq H criterion 2} is modified as:
\begin{equation}\label{eq W_cal 1}
    \begin{split}
\mathcal{W}_k&=e_k^\top\Big(\mathbb{I}-P+\mathbb{A}^\top P \mathbb{A}
\Big)e_k+e_k^\top \Big(\mathbb{A}^\top P \mathbb{E}\Big)\omega_k+\omega_k^\top \Big(\mathbb{E}^\top P\mathbb{A}\Big)e_k \\&+\omega_k^\top \Big(\mathbb{E}^\top P \mathbb{E} -\mu \mathbb{I}\Big)\omega_k.
    \end{split}
\end{equation}
Further, $\mathcal{W}_k \leq 0$ if
\begin{equation}\label{eq W_cal 2}
\begin{bmatrix}
 \mathbb{I}-P+ \mathbb{A}^\top P \mathbb{A}&\mathbb{A}^\top P\mathbb{E}\\  \mathbb{E}^\top P\mathbb{A}&\mathbb{E}^\top P\mathbb{E}-\mu \mathbb{I}
\end{bmatrix} \leq 0.
\end{equation} 
Through the deployment of Lemma~\ref{Lem 4}, the inequality~\eqref{eq W_cal 2} is equivalent to
\begin{equation}\label{eq W_cal 3} 
  \begin{bmatrix}
 \mathbb{I}-P&\mathbb{O}\\ \mathbb{O}&-\mu \mathbb{I}
\end{bmatrix} +  \begin{bmatrix}
        \mathbb{A}^\top\\\mathbb{E}^\top
    \end{bmatrix}P\begin{bmatrix}
       \mathbb{A}&\mathbb{E}
    \end{bmatrix} \leq 0.
\end{equation}
The use of Schur Lemma on~\eqref{eq W_cal 3} resulted in
\begin{equation}\label{eq W_cal 4}
 \Sigma_1  +{\mathbf{NL}}\leq 0, 
\end{equation}
where 
\begin{equation}\label{eq W_cal 4 Sigma1}
 \Sigma_1=   \begin{bmatrix}
   \begin{bmatrix}
 \mathbb{I}-P&\mathbb{O}\\ \mathbb{O}&-\mu \mathbb{I}
\end{bmatrix}&  \begin{bmatrix}
        (A-LC)^\top P\\ (E-LD)^\top P
    \end{bmatrix}\\
    \star &-P
  \end{bmatrix},
\end{equation} 
\begin{equation}\label{eq W_cal 4 NL}
   \mathbf{NL}= 
{{\begin{bmatrix}
\mathbb{O}&\mathbb{O}&\bigg(\sum_{i,j=1}^{m,\bar{n}}   f_{ij}G\mathcal{H}_{ij}H_i-\sum_{i,j=1}^{r,\bar{p}}   g_{ij}LF\mathcal{G}_{ij}G_i\bigg)^\top P\\
\star &\mathbb{O}&\mathbb{O}\\
\star &\star &\mathbb{O}
  \end{bmatrix}}}.
\end{equation}
For enhancement of comprehensibility of the method, let us introduce $R^\top = P L$. Now, one can express the term $\Sigma_1$ in the ensuing form:
\begin{equation}\label{eq LMI 1 Sigma}
 \Sigma=  \begin{bmatrix}
\begin{bmatrix}
 \mathbb{I}-P&\mathbb{O}\\ \mathbb{O}&-\mu \mathbb{I}
\end{bmatrix}& \begin{bmatrix}
        A^\top P- C^\top R\\ E^\top P- D^\top R
\end{bmatrix}\\
    \star &-P
  \end{bmatrix}.
\end{equation}
Similarly, the term $\mathbf{NL}$ is rewritten as:
\begin{equation}\label{eq W_cal NL}
\begin{split}
\mathbf{NL} &=\sum_{i,j=1}^{m,\bar{n}}\begin{pmatrix}
    \underbrace{\begin{bmatrix}
     \mathbb{O}\\\mathbb{O}\\ PG\mathcal{H}_{ij}  
\end{bmatrix}}_{\mathbb{U}_{ij}^\top}\underbrace{ f_{ij}\overbrace{\begin{bmatrix}
       H_i & \mathbb{O}& \mathbb{O}
\end{bmatrix}}^{\mathbb{H}_i}}_{\mathbb{V}_{ij}}+\mathbb{V}_{ij}^\top \mathbb{U}_{ij}
\end{pmatrix}\\&+\sum_{i,j=1}^{r,\bar{p}}\begin{pmatrix}\underbrace{\begin{bmatrix}
     \mathbb{O}\\\mathbb{O}\\ - R^\top F\mathcal{G}_{ij}  
\end{bmatrix}}_{\mathbb{M}_{ij}^\top}\underbrace{g_{ij}\overbrace{\begin{bmatrix}
        G_i & \mathbb{O}& \mathbb{O}
\end{bmatrix}}^{\mathbb{G}_i}}_{\mathbb{N}_{ij}}+\mathbb{N}_{ij}^\top \mathbb{M}_{ij}\end{pmatrix}.
\end{split}
\end{equation}

Recently, the topic of LMI-based observers has been extensively investigated to handle Lipschitz nonlinearities. One can go through~\cite{ibrir2007circle_descrete,2012_Nonlinear_Discrete_LMI,Khadilja_2019_rnc} and so on. The authors of \cite{2012_Nonlinear_Discrete_LMI, aviles2019observer} have used the global form of nonlinearities (i.e., $\tilde{f}(x,\hat{x})=f(x)-f(\hat{x})$). However, the detailed form of nonlinearities (that is,~\eqref{eq 3}) was deployed in papers\cite{Khadilja_2019_rnc}. The use of nonlinearities in their detailed form enables the inclusion of additional decision variables in the LMI approach.
In this paper, we propose two new LMI conditions inspired by the method outlined in~\cite{LCSS_Shiv_2023}.

\begin{table*}[!ht]
\hrule
{\small
\begin{equation}\label{eq 11 matrix Z}
    \mathbb{Z} =   \left[
    \begin{array}{c c c c c c c c c c c c c}
    Z_{11} & Z_{a^1_{12}} & \hdots&Z_{a^1_{1\bar{n}}} & \textcolor{plum}{Z_{b^{11}_{21}}} & \textcolor{plum}{Z_{{b}^{11}_{22}}}  &\textcolor{plum}{\hdots} &\textcolor{plum}{Z_{{b}^{11}_{2\bar{n}}}}   &\textcolor{green}{\hdots}& \textcolor{myred}{Z_{b^{11}_{m1}}}  & \textcolor{myred}{Z_{{b}^{11}_{m2}}} &\textcolor{myred}{\hdots} & \textcolor{myred}{Z_{{b}^{11}_{m\bar{n}}}}
    \\
    Z_{a^1_{12}} & Z_{12} & \hdots &Z_{a^2_{1\bar{n}}}&\textcolor{plum}{Z_{{b}^{12}_{21}}}& \textcolor{plum}{Z_{{b}^{12}_{22}}} &\textcolor{plum}{\hdots} & \textcolor{plum}{Z_{{b}^{12}_{2\bar{n}}}} &\textcolor{green}{\hdots} &
    \textcolor{myred}{Z_{{b}^{12}_{m1}}}& \textcolor{myred}{Z_{{b}^{12}_{m2}}}&\textcolor{myred}{\hdots} & \textcolor{myred}{Z_{{b}^{12}_{m\bar{n}}}}\\
\vdots&\vdots&\ddots&\vdots&\textcolor{plum}{\vdots}&\textcolor{plum}{\vdots}&\textcolor{plum}{\ddots}&\textcolor{plum}{\vdots}&\textcolor{green}{\hdots} &\textcolor{myred}{\vdots}&\textcolor{myred}{\vdots}&\textcolor{myred}{\ddots}&\textcolor{myred}{\vdots}\\
Z_{a^1_{1\bar{n}}} & Z_{a^2_{1\bar{n}}} & \hdots & Z_{1\bar{n}}&\textcolor{plum}{Z_{{b}^{1\bar{n}}_{21}}}& \textcolor{plum}{Z_{{b}^{1\bar{n}}_{22}}}
&\textcolor{plum}{\hdots} & \textcolor{plum}{Z_{{b}^{1\bar{n}}_{2\bar{n}}}}&\textcolor{green}{\hdots} &
\textcolor{myred}{Z_{{b}^{1\bar{n}}_{m1}}}& \textcolor{myred}{Z_{{b}^{1\bar{n}}_{m2}}}
&\textcolor{myred}{\hdots} & \textcolor{myred}{Z_{{b}^{1\bar{n}}_{m\bar{n}}}}
\\
\textcolor{plum}{Z_{b^{11}_{21}}} & \textcolor{plum}{Z_{b^{12}_{21}}}  &\textcolor{plum}{\hdots} &\textcolor{plum}{Z_{{b}^{1\bar{n}}_{21}}}& Z_{21} &   Z_{a^1_{22} }& \hdots& Z_{a^1_{2\bar{n}}}&\textcolor{green}{\hdots}& \textcolor{myred1}{Z_{{b}^{21}_{m1}}} & \textcolor{myred1}{Z_{{b}^{21}_{m2}}}
&\textcolor{myred1}{\hdots} & \textcolor{myred1}{Z_{{b}^{21}_{m\bar{n}}}}\\
\textcolor{plum}{Z_{b^{11}_{22}}} & \textcolor{plum}{Z_{b^{12}_{22}}}  &\textcolor{plum}{\hdots} &\textcolor{plum}{Z_{{b}^{1\bar{n}}_{22}}}& Z_{a^1_{22}}&  Z_{22} &\hdots &Z_{a^2_{2\bar{n}}} &\textcolor{green}{\hdots} &
\textcolor{myred1}{Z_{{b}^{22}_{m1}}} & \textcolor{myred1}{Z_{{b}^{22}_{m2}}}
&\textcolor{myred1}{\hdots} & \textcolor{myred1}{Z_{{b}^{22}_{m\bar{n}}}}  \\
\textcolor{plum}{\vdots}&
\textcolor{plum}{\vdots}&
\textcolor{plum}{\ddots}&
\textcolor{plum}{\vdots}&\vdots&\vdots&\ddots&\vdots&\textcolor{green}{\hdots}&\textcolor{myred1}{\vdots}&
\textcolor{myred1}{\vdots}&
\textcolor{myred1}{\ddots}&
\textcolor{myred1}{\vdots}\\
\textcolor{plum}{Z_{b^{11}_{2\bar{n}}}} & \textcolor{plum}{Z_{b^{12}_{2\bar{n}}}}  &\textcolor{plum}{\hdots} &\textcolor{plum}{Z_{{b}^{1\bar{n}}_{2\bar{n}}}}& 
 Z_{a^1_{2\bar{n}}}&
 Z_{a^2_{2\bar{n}}}&\hdots&  Z_{2\bar{n}}&\textcolor{green}{\hdots}&\textcolor{myred1}{Z_{{b}^{2\bar{n}}_{m1}}} & \textcolor{myred1}{Z_{{b}^{2\bar{n}}_{m2}}}
&\textcolor{myred1}{\hdots} & \textcolor{myred1}{Z_{{b}^{2\bar{n}}_{m\bar{n}}}}
\\
\textcolor{green}{\vdots}&
\textcolor{green}{\vdots}&
\textcolor{green}{\vdots}&
\textcolor{green}{\vdots}&
\textcolor{green}{\vdots}&
\textcolor{green}{\vdots}&
\textcolor{green}{\vdots}&
\textcolor{green}{\vdots}&\ddots&
\textcolor{green}{\vdots}&
\textcolor{green}{\vdots}&
\textcolor{green}{\vdots}&
\textcolor{green}{\vdots}\\
\textcolor{myred}{Z_{b^{11}_{m1}}} & \textcolor{myred}{Z_{b^{12}_{m1}}}  &\textcolor{myred}{\hdots} &\textcolor{myred}{Z_{{b}^{1\bar{n}}_{m1}}}&\textcolor{myred1}{Z_{b^{21}_{m1}}} & \textcolor{myred1}{Z_{b^{22}_{m1}}}  &\textcolor{myred1}{\hdots} &\textcolor{myred1}{Z_{{b}^{2\bar{n}}_{m1}}}&\textcolor{green}{\hdots} & Z_{m1} &  Z_{a^1_{m2}} &\hdots &   Z_{a^1_{m\bar{n}}}\\
\textcolor{myred}{Z_{b^{11}_{m2}}} & \textcolor{myred}{Z_{b^{12}_{m2}}}  &\textcolor{myred}{\hdots} &\textcolor{myred}{Z_{{b}^{1\bar{n}}_{m2}}}&\textcolor{myred1}{Z_{b^{21}_{m2}}} & \textcolor{myred1}{Z_{b^{22}_{m2}}}  &\textcolor{myred1}{\hdots} &\textcolor{myred1}{Z_{{b}^{2\bar{n}}_{m2}}}&\textcolor{green}{\hdots} &
 Z_{a^1_{m2}}&  Z_{m2} &\hdots &  Z_{a^2_{m\bar{n}}}\\
\textcolor{myred}{\vdots}&
\textcolor{myred}{\vdots}&
\textcolor{myred}{\ddots}&
\textcolor{myred}{\vdots}&\textcolor{myred1}{\vdots}&
\textcolor{myred1}{\vdots}&
\textcolor{myred1}{\ddots}&
\textcolor{myred1}{\vdots}&\textcolor{green}{\hdots}&\vdots&\vdots&\ddots&\vdots\\
\textcolor{myred}{Z_{b^{11}_{m\bar{n}}}} & \textcolor{myred}{Z_{b^{12}_{m\bar{n}}}}  &\textcolor{myred}{\hdots} &\textcolor{myred}{Z_{{b}^{1\bar{n}}_{m\bar{n}}}}&\textcolor{myred1}{Z_{b^{21}_{m\bar{n}}}} & \textcolor{myred1}{Z_{b^{22}_{m\bar{n}}}}  &\textcolor{myred1}{\hdots} &\textcolor{myred1}{Z_{{b}^{2\bar{n}}_{m\bar{n}}}}&\textcolor{green}{\hdots}&
 Z_{a^1_{m\bar{n}}}& Z_{a^2_{m\bar{n}}}
&\hdots &  Z_{m\bar{n}}
      \end{array}
\right],
\end{equation}}
where 
$Z_{ij}>0\in \mathbf{S}^{\bar{n}},~Z_{a^k_{ij}}\geq 0\in \mathbf{S}^{\bar{n}}\,\,\forall i,k\in \{1,\hdots,m\}, \& j\in \{1,\hdots,\bar{n}\}$;$~Z_{b^{kj}_{ij}}\geq0\in \mathbf{S}^{\bar{n}},\forall i\in \{2,\hdots,m\},k\in \{1,\hdots,m-1\}, \& j\in \{1,\hdots,\bar{n}\}$ such that $\mathbb{Z}>0$.
\hrule
\end{table*}
\begin{table*}[!ht]
\begin{equation}\label{eq 11 matrix S}
    \mathbb{S} =   \left[
    \begin{array}{c c c c c c c c c c c c c}
    S_{11} & S_{a^1_{12}} & \hdots&S_{a^1_{1\bar{p}}} & \textcolor{plum}{S_{b^{11}_{21}}} & \textcolor{plum}{S_{{b}^{11}_{22}}}  &\textcolor{plum}{\hdots} &\textcolor{plum}{S_{{b}^{11}_{2\bar{p}}}}   &\textcolor{green}{\hdots}& \textcolor{myred}{S_{b^{11}_{r1}}}  & \textcolor{myred}{S_{{b}^{11}_{r2}}} &\textcolor{myred}{\hdots} & \textcolor{myred}{S_{{b}^{11}_{r\bar{p}}}}
    \\
    S_{a^1_{12}} & S_{12} & \hdots &S_{a^2_{1\bar{p}}}&\textcolor{plum}{S_{{b}^{12}_{21}}}& \textcolor{plum}{S_{{b}^{12}_{22}}} &\textcolor{plum}{\hdots} & \textcolor{plum}{S_{{b}^{12}_{2\bar{p}}}} &\textcolor{green}{\hdots} &
    \textcolor{myred}{S_{{b}^{12}_{r1}}}& \textcolor{myred}{S_{{b}^{12}_{r2}}}&\textcolor{myred}{\hdots} & \textcolor{myred}{S_{{b}^{12}_{r\bar{p}}}}\\
\vdots&\vdots&\ddots&\vdots&\textcolor{plum}{\vdots}&\textcolor{plum}{\vdots}&\textcolor{plum}{\ddots}&\textcolor{plum}{\vdots}&\textcolor{green}{\hdots} &\textcolor{myred}{\vdots}&\textcolor{myred}{\vdots}&\textcolor{myred}{\ddots}&\textcolor{myred}{\vdots}\\
S_{a^1_{1\bar{p}}} & S_{a^2_{1\bar{p}}} & \hdots & S_{1\bar{p}}&\textcolor{plum}{S_{{b}^{1\bar{p}}_{21}}}& \textcolor{plum}{S_{{b}^{1\bar{p}}_{22}}}
&\textcolor{plum}{\hdots} & \textcolor{plum}{S_{{b}^{1\bar{p}}_{2\bar{p}}}}&\textcolor{green}{\hdots} &
\textcolor{myred}{S_{{b}^{1\bar{p}}_{r1}}}& \textcolor{myred}{S_{{b}^{1\bar{p}}_{r2}}}
&\textcolor{myred}{\hdots} & \textcolor{myred}{S_{{b}^{1\bar{p}}_{r\bar{p}}}}
\\
\textcolor{plum}{S_{b^{11}_{21}}} & \textcolor{plum}{S_{b^{12}_{21}}}  &\textcolor{plum}{\hdots} &\textcolor{plum}{S_{{b}^{1\bar{p}}_{21}}}& S_{21} &   S_{a^1_{22} }& \hdots& S_{a^1_{2\bar{p}}}&\textcolor{green}{\hdots}& \textcolor{myred1}{S_{{b}^{21}_{r1}}} & \textcolor{myred1}{S_{{b}^{21}_{r2}}}
&\textcolor{myred1}{\hdots} & \textcolor{myred1}{S_{{b}^{21}_{r\bar{p}}}}\\
\textcolor{plum}{S_{b^{11}_{22}}} & \textcolor{plum}{S_{b^{12}_{22}}}  &\textcolor{plum}{\hdots} &\textcolor{plum}{S_{{b}^{1\bar{p}}_{22}}}& S_{a^1_{22}}&  S_{22} &\hdots &S_{a^2_{2\bar{p}}} &\textcolor{green}{\hdots} &
\textcolor{myred1}{S_{{b}^{22}_{r1}}} & \textcolor{myred1}{S_{{b}^{22}_{r2}}}
&\textcolor{myred1}{\hdots} & \textcolor{myred1}{S_{{b}^{22}_{r\bar{p}}}}  \\
\textcolor{plum}{\vdots}&
\textcolor{plum}{\vdots}&
\textcolor{plum}{\ddots}&
\textcolor{plum}{\vdots}&\vdots&\vdots&\ddots&\vdots&\textcolor{green}{\hdots}&\textcolor{myred1}{\vdots}&
\textcolor{myred1}{\vdots}&
\textcolor{myred1}{\ddots}&
\textcolor{myred1}{\vdots}\\
\textcolor{plum}{S_{b^{11}_{2\bar{p}}}} & \textcolor{plum}{S_{b^{12}_{2\bar{p}}}}  &\textcolor{plum}{\hdots} &\textcolor{plum}{S_{{b}^{1\bar{p}}_{2\bar{p}}}}& 
 S_{a^1_{2\bar{p}}}&
 S_{a^2_{2\bar{p}}}&\hdots&  S_{2\bar{p}}&\textcolor{green}{\hdots}&\textcolor{myred1}{S_{{b}^{2\bar{p}}_{r1}}} & \textcolor{myred1}{S_{{b}^{2\bar{p}}_{r2}}}
&\textcolor{myred1}{\hdots} & \textcolor{myred1}{S_{{b}^{2\bar{p}}_{r\bar{p}}}}
\\
\textcolor{green}{\vdots}&
\textcolor{green}{\vdots}&
\textcolor{green}{\vdots}&
\textcolor{green}{\vdots}&
\textcolor{green}{\vdots}&
\textcolor{green}{\vdots}&
\textcolor{green}{\vdots}&
\textcolor{green}{\vdots}&\ddots&
\textcolor{green}{\vdots}&
\textcolor{green}{\vdots}&
\textcolor{green}{\vdots}&
\textcolor{green}{\vdots}\\
\textcolor{myred}{S_{b^{11}_{r1}}} & \textcolor{myred}{S_{b^{12}_{r1}}}  &\textcolor{myred}{\hdots} &\textcolor{myred}{S_{{b}^{1\bar{p}}_{r1}}}&\textcolor{myred1}{S_{b^{21}_{r1}}} & \textcolor{myred1}{S_{b^{22}_{r1}}}  &\textcolor{myred1}{\hdots} &\textcolor{myred1}{S_{{b}^{2\bar{p}}_{r1}}}&\textcolor{green}{\hdots} & S_{r1} &  S_{a^1_{r2}} &\hdots &   S_{a^1_{r\bar{p}}}\\
\textcolor{myred}{S_{b^{11}_{r2}}} & \textcolor{myred}{S_{b^{12}_{r2}}}  &\textcolor{myred}{\hdots} &\textcolor{myred}{S_{{b}^{1\bar{p}}_{r2}}}&\textcolor{myred1}{S_{b^{21}_{r2}}} & \textcolor{myred1}{S_{b^{22}_{r2}}}  &\textcolor{myred1}{\hdots} &\textcolor{myred1}{S_{{b}^{2\bar{p}}_{r2}}}&\textcolor{green}{\hdots} &
 S_{a^1_{r2}}&  S_{r2} &\hdots &  S_{a^2_{r\bar{p}}}\\
\textcolor{myred}{\vdots}&
\textcolor{myred}{\vdots}&
\textcolor{myred}{\ddots}&
\textcolor{myred}{\vdots}&\textcolor{myred1}{\vdots}&
\textcolor{myred1}{\vdots}&
\textcolor{myred1}{\ddots}&
\textcolor{myred1}{\vdots}&\textcolor{green}{\hdots}&\vdots&\vdots&\ddots&\vdots\\
\textcolor{myred}{S_{b^{11}_{r\bar{p}}}} & \textcolor{myred}{S_{b^{12}_{r\bar{p}}}}  &\textcolor{myred}{\hdots} &\textcolor{myred}{S_{{b}^{1\bar{p}}_{r\bar{p}}}}&\textcolor{myred1}{S_{b^{21}_{r\bar{p}}}} & \textcolor{myred1}{S_{b^{22}_{r\bar{p}}}}  &\textcolor{myred1}{\hdots} &\textcolor{myred1}{S_{{b}^{2\bar{p}}_{r\bar{p}}}}&\textcolor{green}{\hdots}&
 S_{a^1_{r\bar{p}}}& S_{a^2_{r\bar{p}}}
&\hdots &  S_{r\bar{p}}
      \end{array}
\right],
\end{equation}
where 
$S_{ij}>0\in \mathbf{S}^{\bar{p}},~S_{a^k_{ij}}\geq 0\in \mathbf{S}^{\bar{p}}\,\,\forall i,k\in \{1,\hdots,r\}, \& j\in \{1,\hdots,\bar{p}\}$;$~S_{b^{kj}_{ij}}\geq 0\in \mathbf{S}^{\bar{p}},\,\,\forall i\in \{2,\hdots,r\},k\in \{1,\hdots,r-1\}, \& j\in \{1,\hdots,\bar{p}\}$
so that $\mathbb{S}>0$.
\hrule
\end{table*}

In order to avoid cumbersome equations, the term $\mathbf{NL}$ is reformulated as:
\begin{equation}\label{eq W_cal NL 1}
    \mathbf{NL}=\mathbb{U}^\top (\mathbb{H}\Phi )+\Phi^\top\mathbb{H}^\top  \mathbb{U}+\mathbb{M}^\top ( \mathbb{G}\Psi)+\Psi^\top \mathbb{G}^\top \mathbb{M},
\end{equation}
where
\begin{align}
  \mathbb{U}&=\begin{bmatrix}
     \mathbb{U}^\top_{11} & \hdots&
     \mathbb{U}^\top_{1\bar{n}}
     &\hdots &
      \mathbb{U}^\top_{m1}
     &\hdots & 
     \mathbb{U}^\top_{m\bar{n}}
    \end{bmatrix}^\top, \label{eq 13 U}\\
    \mathbb{H}&=\text{block-diag}(\mathbb{H}_1,\hdots,\mathbb{H}_1,\hdots,\mathbb{H}_m,\hdots,\mathbb{H}_m),\label{eq 13 H}\\
    \Phi&=\begin{bmatrix}
     f_{11} \mathbb{I}&\hdots&
    f_{1\bar{n}} \mathbb{I}& \hdots&
    f_{m1} \mathbb{I}& \hdots & 
      f_{m\bar{n}}\mathbb{I}
    \end{bmatrix}^\top,
    \label{eq 13 Phi}\\
\mathbb{M}&=\begin{bmatrix} \mathbb{M}^\top_{11} & \hdots&\mathbb{M}^\top_{1\bar{p}}&\hdots & \mathbb{M}^\top_{r1}&\hdots & \mathbb{M}^\top_{r\bar{p}}\end{bmatrix}^\top,\label{eq 13 M} \\
 \mathbb{G}&=\text{block-diag}(\mathbb{G}_1,\hdots,\mathbb{G}_1,\hdots,\mathbb{G}_r,\hdots,\mathbb{G}_r),\label{eq 13 G}\\
    \Psi&=\begin{bmatrix}
     g_{11} \mathbb{I}&\hdots&
    g_{1\bar{p}} \mathbb{I}& \hdots&
    g_{r1} \mathbb{I}& \hdots & 
      g_{r\bar{p}}\mathbb{I}
    \end{bmatrix}^\top
    \label{eq 13 Psi}.
\end{align}
From~\eqref{eq W_cal 4},~\eqref{eq LMI 1 Sigma} and~\eqref{eq W_cal NL 1}, $\mathcal{W}\leq 0$ if
\begin{equation}\label{eq W_cal 5}
 \Sigma  +\underbrace{\mathbb{U}^\top (\mathbb{H}\Phi )+\Phi^\top\mathbb{H}^\top  \mathbb{U}}_{\mathbf{NL}_1}+\underbrace{\mathbb{M}^\top (\mathbb{G}\Psi )+\Psi^\top \mathbb{G}^\top \mathbb{M}}_{\mathbf{NL}_2}\leq 0.
\end{equation}
Through the use of Lemma~\ref{Lem 2} and Lemma~\ref{Lem 3}, two new LMI conditions are developed in the sequels, which ensures the asymptotic stability of the system~\eqref{eq 8}.

\begin{theorem}\label{Theorem 1 IFAC LMI}
Let us introduce two matrices, $\mathbb{Z}$ and $\mathbb{S}$, illustrated by~\eqref{eq 11 matrix Z} and~\eqref{eq 11 matrix S}, respectively. If there exist matrices $P>0\in \mathbf{S}^{n}$, $R\in \mathbb{R}^{p \times n}$ and a positive scalar $\mu$ such that the following optimization problem is solvable:
\begin{equation}\label{eq LMI 1}
    \begin{split}
\text{minimize}~\mu~\text{subject~to}&\\
        \begin{bmatrix}
           \Sigma & \mathbb{U}^\top &
    (\mathbb{Z}\mathbb{H}\Phi_m)^\top &\mathbb{M}^\top &
    (\mathbb{S}\mathbb{G}\Psi_m)^\top\\
           \star&-\mathbb{Z}&\mathbb{O}&
           \mathbb{O}&\mathbb{O}\\
\star&\star&-\mathbb{Z}&\mathbb{O}& \mathbb{O}\\
\star&\star&\star&-\mathbb{S}&\mathbb{O}\\
\star&\star&\star&\star&-\mathbb{S}
        \end{bmatrix}\leq 0,
    \end{split}
\end{equation}
where $\Sigma$, $\mathbb{U}$, and $\mathbb{M}$ are described by~\eqref{eq LMI 1 Sigma},~\eqref{eq 13 U} and~\eqref{eq 13 M}, respectively. 
Additionally,
\begin{align}
\Phi_m&=\begin{bmatrix}
     f_{b_{11}} \mathbb{I}&\hdots&
    f_{b_{1\bar{n}}} \mathbb{I}& \hdots&
    f_{b_{m1}} \mathbb{I}& \hdots & 
      f_{b_{m\bar{n}}}\mathbb{I}
    \end{bmatrix}^\top,
    \label{eq 13 Phim}\\ 
\Psi_m&=\begin{bmatrix}
     g_{b_{11}} \mathbb{I}&\hdots&
    g_{b_{1\bar{p}}} \mathbb{I}& \hdots&
    g_{b_{r1}} \mathbb{I}& \hdots & 
      g_{b_{r\bar{p}}}\mathbb{I}
    \end{bmatrix}^\top
    \label{eq 13 Psim}.
\end{align}
Then, the error dynamic~\eqref{eq 8} is $\mathcal{H}_\infty$ asymptotically stable. The gain matrix $L$ is computed by utilising $L=P^{-1}R^\top$.
\end{theorem}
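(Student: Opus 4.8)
The starting point is the sufficient condition already derived in \eqref{eq W_cal 5}: once the matrix inequality $\Sigma+\mathbf{NL}_1+\mathbf{NL}_2\le 0$ holds, one has $\mathcal{W}_k\le 0$ for every $k$, and summing \eqref{eq H criterion 2} over $k=0,\dots,N$ together with the Lyapunov function \eqref{Ly fun} gives $\|e\|_{\mathcal{L}_2}^2\le \mu\|\omega\|_{\mathcal{L}_2}^2+\lambda_{\max}(P)\|e_0\|^2$, i.e.\ the $\mathcal{H}_\infty$ bound \eqref{eq H criterion 1} with $\nu=\lambda_{\max}(P)$; in the disturbance-free case $\omega\equiv 0$ the same inequality reads $\Delta V_k\le-\|e_k\|^2<0$ for $e_k\ne 0$, so $V$ is a strict Lyapunov function and \eqref{eq 8} is asymptotically stable. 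Hence the whole argument reduces to showing that feasibility of \eqref{eq LMI 1} implies $\Sigma+\mathbf{NL}_1+\mathbf{NL}_2\le 0$.

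To do this I would dominate the two parameter-dependent cross terms one at a time. Applying the standard Young inequality \eqref{L 3.1} of Lemma~\ref{Lem 2} to $\mathbf{NL}_1=\mathbb{U}^\top(\mathbb{H}\Phi)+(\mathbb{H}\Phi)^\top\mathbb{U}$ with the structured multiplier $\mathbb{Z}>0$, and to $\mathbf{NL}_2=\mathbb{M}^\top(\mathbb{G}\Psi)+(\mathbb{G}\Psi)^\top\mathbb{M}$ with $\mathbb{S}>0$, yields
\[
\mathbf{NL}_1\le \mathbb{U}^\top\mathbb{Z}^{-1}\mathbb{U}+\Phi^\top\mathbb{H}^\top\mathbb{Z}\mathbb{H}\Phi,\qquad
\mathbf{NL}_2\le \mathbb{M}^\top\mathbb{S}^{-1}\mathbb{M}+\Psi^\top\mathbb{G}^\top\mathbb{S}\mathbb{G}\Psi .
\]
The residual terms still carry the unknown scalars $f_{ij},g_{ij}$ through $\Phi,\Psi$, and these are neutralized by Lemma~\ref{Lem 3}: the block sign pattern imposed on $\mathbb{Z}$ in \eqref{eq 11 matrix Z} (diagonal blocks $Z_{ij}>0$, all $Z_{a^k_{ij}}$ and $Z_{b^{kj}_{ij}}$ positive semidefinite) is exactly what \eqref{Lem3 4} requires, and \eqref{eq 7 f_tilde} gives $0\le f_{ij}\le f_{b_{ij}}$, so $\Phi^\top(\mathbb{H}^\top\mathbb{Z}\mathbb{H})\Phi\le\Phi_m^\top(\mathbb{H}^\top\mathbb{Z}\mathbb{H})\Phi_m$; symmetrically \eqref{eq 11 matrix S} and \eqref{eq 7 g_tilde} give $\Psi^\top(\mathbb{G}^\top\mathbb{S}\mathbb{G})\Psi\le\Psi_m^\top(\mathbb{G}^\top\mathbb{S}\mathbb{G})\Psi_m$, with the constant matrices $\Phi_m,\Psi_m$ of \eqref{eq 13 Phim}--\eqref{eq 13 Psim}. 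Using $\Phi_m^\top\mathbb{H}^\top\mathbb{Z}\mathbb{H}\Phi_m=(\mathbb{Z}\mathbb{H}\Phi_m)^\top\mathbb{Z}^{-1}(\mathbb{Z}\mathbb{H}\Phi_m)$ and the analogous identity for $\mathbb{S}$, a sufficient condition for $\Sigma+\mathbf{NL}_1+\mathbf{NL}_2\le 0$ is
\[
\Sigma+\mathbb{U}^\top\mathbb{Z}^{-1}\mathbb{U}+(\mathbb{Z}\mathbb{H}\Phi_m)^\top\mathbb{Z}^{-1}(\mathbb{Z}\mathbb{H}\Phi_m)+\mathbb{M}^\top\mathbb{S}^{-1}\mathbb{M}+(\mathbb{S}\mathbb{G}\Psi_m)^\top\mathbb{S}^{-1}(\mathbb{S}\mathbb{G}\Psi_m)\le 0 .
\]

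Since $\mathbb{Z},\mathbb{S}>0$, applying the Schur complement to the four negative-definite diagonal blocks turns this last display into exactly the block inequality \eqref{eq LMI 1}; the substitution $R^\top=PL$ already carried out inside $\Sigma$ (see \eqref{eq LMI 1 Sigma}) has linearized every bilinear $PL$ coupling, so \eqref{eq LMI 1} is a genuine LMI in $(P,R,\mu)$, $L=P^{-1}R^\top$ recovers the gain, and minimizing $\mu$ minimizes the attenuation level $\sqrt{\mu}$. I expect the delicate point to be the joint use of Young's inequality and Lemma~\ref{Lem 3}: the generalized multipliers $\mathbb{Z},\mathbb{S}$ must be chosen so that \eqref{L 3.1} leaves behind only a constant quadratic form plus the $\mathbb{Z}^{-1},\mathbb{S}^{-1}$ pieces that Schur can absorb, while the congruent matrices $\mathbb{H}^\top\mathbb{Z}\mathbb{H}$ and $\mathbb{G}^\top\mathbb{S}\mathbb{G}$ still satisfy the hypotheses of \eqref{Lem3 4}. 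Verifying that the admissibility conditions on the off-diagonal blocks $Z_{a^k_{ij}},Z_{b^{kj}_{ij}},S_{a^k_{ij}},S_{b^{kj}_{ij}}$ are simultaneously compatible with $\mathbb{Z}>0$, $\mathbb{S}>0$ and with Lemma~\ref{Lem 3} is precisely what legitimizes the extra decision variables and therefore substantiates the claimed reduction in conservatism.
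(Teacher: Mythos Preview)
Your proposal is correct and follows essentially the same route as the paper: apply the standard Young inequality \eqref{L 3.1} with the multipliers $\mathbb{Z},\mathbb{S}$ to bound $\mathbf{NL}_1,\mathbf{NL}_2$, invoke Lemma~\ref{Lem 3} together with \eqref{eq 7 f_tilde}--\eqref{eq 7 g_tilde} to replace $\Phi,\Psi$ by $\Phi_m,\Psi_m$, and then take a Schur complement to recover \eqref{eq LMI 1}. Your extra rewriting $\Phi_m^\top\mathbb{H}^\top\mathbb{Z}\mathbb{H}\Phi_m=(\mathbb{Z}\mathbb{H}\Phi_m)^\top\mathbb{Z}^{-1}(\mathbb{Z}\mathbb{H}\Phi_m)$ and the explicit summation argument for \eqref{eq H criterion 1} simply make the Schur step and the $\mathcal{H}_\infty$ conclusion more transparent than the paper's terse ``deploying Schur's Lemma on~\eqref{eq W_cal 5.1}'', but the logic is identical.
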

\begin{proof}
The deployment of the inequality~\eqref{L 3.1} on the terms $\mathbf{NL}_1$ and $\mathbf{NL}_2$ yield:
\begin{equation}\label{thm 1 pf e 1}
    \mathbf{NL}_1 \leq \mathbb{U}^\top (\mathbb{Z})^{-1}\mathbb{U}+\Phi^\top\mathbb{H}^\top (\mathbb{Z})\mathbb{H}\Phi,
\end{equation}
and
\begin{equation}\label{thm 1 pf e 2}
    \mathbf{NL}_2 \leq \mathbb{M}^\top (\mathbb{S})^{-1}\mathbb{M}+\Psi^\top\mathbb{G}^\top (\mathbb{S})\mathbb{G}\Psi,
\end{equation}
where $\mathbb{Z}>0$ and $\mathbb{S}>0$ are defined in~\eqref{eq 11 matrix Z} and~\eqref{eq 11 matrix S}, respectively.

Since $f_{ij}\leq f_{b_{ij}}$, we obtain the following inequality by implementing Lemma~\ref{Lem 3} on the matrices $\Phi$ and $\Phi_m$:
\begin{equation}\label{thm 1 pf e 3}
    \Phi^\top\mathbb{H}^\top (\mathbb{Z})\mathbb{H}\Phi \leq 
    \Phi_m^\top\mathbb{H}^\top (\mathbb{Z})\mathbb{H}\Phi_m,
\end{equation}
where $\Phi_m$ is defined in~\eqref{eq 13 Phim}.
\\Similarly,
\begin{equation}\label{thm 1 pf e 4}
    \Psi^\top\mathbb{G}^\top (\mathbb{S})\mathbb{G}\Psi \leq 
    \Psi_m^\top\mathbb{G}^\top (\mathbb{S})\mathbb{G}\Psi_m,
\end{equation}
where $\Psi_m$ is described in~\eqref{eq 13 Psim}.
\\From~\eqref{thm 1 pf e 3} and~\eqref{thm 1 pf e 4}, we get
\begin{align}
 \mathbf{NL}_1 &\leq \mathbb{U}^\top (\mathbb{Z})^{-1}\mathbb{U}+\Phi_m^\top\mathbb{H}^\top (\mathbb{Z})\mathbb{H}\Phi_m,\\
\mathbf{NL}_2 &\leq \mathbb{M}^\top (\mathbb{S})^{-1}\mathbb{M}+\Psi_m^\top\mathbb{G}^\top (\mathbb{S})\mathbb{G}\Psi_m.  
\end{align}
Thus, the inequality~\eqref{eq W_cal 5} is satisfied if
\begin{equation}\label{eq W_cal 5.1}
 \Sigma  +\mathbb{U}^\top (\mathbb{Z})^{-1}\mathbb{U}+\Phi_m^\top\mathbb{H}^\top (\mathbb{Z})\mathbb{H}\Phi_m+\mathbb{M}^\top (\mathbb{S})^{-1}\mathbb{M}+\Psi_m^\top\mathbb{G}^\top (\mathbb{S})\mathbb{G}\Psi_m\leq 0.
\end{equation}
The LMI~\eqref{eq LMI 1} is deduced by deploying Schur's Lemma on~\eqref{eq W_cal 5.1}. If the LMI~\eqref{eq LMI 1} is feasible, then the condition specified in~\eqref{eq W_cal 3} is fulfilled. Thus, the estimation error dynamic~\eqref{eq 8} satisfies $\mathcal{H}_\infty$ criterion~\eqref{eq H criterion 1}, ensuring the asymptotic stability of~\eqref{eq 8}.
\end{proof}
In the next theorem, an LPV-based LMI condition is showcased.
\begin{theorem}\label{Theorem 2 LPV LMI}
The system~\eqref{eq 8} is $\mathcal{H}_\infty$ asymptotically stable if there exist matrices $\mathbb{Z}$ and $\mathbb{S}$ under the form~\eqref{eq 11 matrix Z} and~\eqref{eq 11 matrix S}, respectively, along with $P>0\in \mathbf{S}^{n}$, $R\in \mathbb{R}^{p \times n}$ and a positive scalar $\mu$, such that the following optimization problem is solvable:
\begin{equation}\label{eq LMI 2}
\begin{split}
\text{minimize}~\mu~\text{subject~to}&\\
    \begin{bmatrix}
    \Sigma & (\mathbb{U}+\mathbb{Z}\mathbb{H}\Phi)^\top&(\mathbb{M}+\mathbb{S}\mathbb{G}\Psi)^\top\\
        \star &-2\mathbb{Z}&\mathbb{O}\\
        \star& \star &-2\mathbb{S}
\end{bmatrix} &< 0,\,\,{ \forall \Phi\in\mathcal{F}_{{H}_m},\forall \Psi\in\mathcal{G}_{{H}_m}},
\end{split}
\end{equation}
where 
\begin{align}
\mathcal{F}_{{H}_m}&=\bigg\{
\{\mathcal{F}_{11},
\hdots,\mathcal{F}_{1\bar{n}},\hdots,\mathcal{F}_{m1},\hdots,\mathcal{F}_{m\bar{n}}\} : \mathcal{F}_{ij} \in [0,f_{b_{ij}}]
  \bigg\} ,\label{thm 2 eq FHM}   \\ 
  \mathcal{G}_{{H}_m}&=\bigg\{
\{\mathcal{F}_{11},
\hdots,\mathcal{F}_{1\bar{p}},\hdots,\mathcal{F}_{r1},\hdots,\mathcal{F}_{r\bar{p}}\} : \mathcal{F}_{ij} \in [0,g_{b_{ij}}]
  \bigg\} \label{thm 2 eq GRM} .
\end{align}
The remaining terms remain the same as the one outlined in Theorem~\ref{Theorem 1 IFAC LMI}. The gain matrix $L$ is calculated by using $L=P^{-1}R^\top$.
\end{theorem}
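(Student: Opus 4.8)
The plan is to follow the same scaffolding as the proof of Theorem~\ref{Theorem 1 IFAC LMI} up to the sufficient condition~\eqref{eq W_cal 5}, namely that $\mathcal{W}_k\le 0$ whenever $\Sigma+\mathbf{NL}_1+\mathbf{NL}_2\le 0$, but to bound $\mathbf{NL}_1$ and $\mathbf{NL}_2$ with the \emph{variant} of Young's inequality~\eqref{L 3.2} instead of the standard one~\eqref{L 3.1}. Concretely, I would apply~\eqref{L 3.2} to $\mathbf{NL}_1=\mathbb{U}^\top(\mathbb{H}\Phi)+(\mathbb{H}\Phi)^\top\mathbb{U}$ with $X=\mathbb{U}$, $Y=\mathbb{H}\Phi$ and $Z=\mathbb{Z}>0$, obtaining $\mathbf{NL}_1\le(\mathbb{U}+\mathbb{Z}\mathbb{H}\Phi)^\top(2\mathbb{Z})^{-1}(\mathbb{U}+\mathbb{Z}\mathbb{H}\Phi)$, and likewise $\mathbf{NL}_2\le(\mathbb{M}+\mathbb{S}\mathbb{G}\Psi)^\top(2\mathbb{S})^{-1}(\mathbb{M}+\mathbb{S}\mathbb{G}\Psi)$ with $Z=\mathbb{S}>0$. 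Hence $\mathcal{W}_k\le 0$ is implied by $\Sigma+(\mathbb{U}+\mathbb{Z}\mathbb{H}\Phi)^\top(2\mathbb{Z})^{-1}(\mathbb{U}+\mathbb{Z}\mathbb{H}\Phi)+(\mathbb{M}+\mathbb{S}\mathbb{G}\Psi)^\top(2\mathbb{S})^{-1}(\mathbb{M}+\mathbb{S}\mathbb{G}\Psi)\le 0$. Note that, in contrast with Theorem~\ref{Theorem 1 IFAC LMI}, the parameter dependence on $\Phi,\Psi$ is now \emph{retained} rather than bounded away, so Lemma~\ref{Lem 3} is not invoked here; instead it will be absorbed by an LPV/convexity argument.

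Next I would apply the Schur complement twice, peeling off the $\mathbb{Z}$-block and then the $\mathbb{S}$-block (both admissible since $2\mathbb{Z}>0$, $2\mathbb{S}>0$ by the structural constraints in~\eqref{eq 11 matrix Z} and~\eqref{eq 11 matrix S}), to rewrite the last inequality as the $3\times3$ block matrix inequality displayed in~\eqref{eq LMI 2}, but evaluated at the instantaneous values of $\Phi$ and $\Psi$ built from $f_{ij}$, $g_{ij}$ as in~\eqref{eq 13 Phi} and~\eqref{eq 13 Psi}. The crucial structural observation is that this block matrix is \emph{affine} in the scalar entries of $\Phi$ (through the constant-matrix product $\mathbb{Z}\mathbb{H}\Phi$ in the off-diagonal block) and, separately, affine in the entries of $\Psi$, with no cross term; since $0\le f_{ij}\le f_{b_{ij}}$ by~\eqref{eq 7 f_tilde} and $0\le g_{ij}\le g_{b_{ij}}$ by~\eqref{eq 7 g_tilde}, the admissible pairs $(\Phi,\Psi)$ form the box $\mathcal{F}_{{H}_m}\times\mathcal{G}_{{H}_m}$. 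By multi-affinity/convexity, the parameter-dependent matrix inequality holds on the entire box as soon as it holds at its (finitely many) vertices, so requiring~\eqref{eq LMI 2} for all $\Phi\in\mathcal{F}_{{H}_m}$ and $\Psi\in\mathcal{G}_{{H}_m}$ is an honest finite LMI feasibility problem.

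Finally, feasibility of~\eqref{eq LMI 2} guarantees that along any trajectory of~\eqref{eq 8} the strict inequality holds for the instantaneous $\Phi,\Psi$, hence $\mathcal{W}_k=\Delta V_k+\|e_k\|^2-\mu\|\omega_k\|^2<0$ for every $k$. Telescoping $\sum_{k\ge0}\Delta V_k=\lim V(e_{N+1})-V(e_0)$ together with $V(e_k)\ge0$ and $V(e_0)=e_0^\top P e_0\le\lambda_{\max}(P)\|e_0\|^2$ yields $\|e\|_{\mathcal{L}_2}^2\le\mu\|\omega\|_{\mathcal{L}_2}^2+\lambda_{\max}(P)\|e_0\|^2$, i.e. the $\mathcal{H}_\infty$ bound~\eqref{eq H criterion 1} with $\nu=\lambda_{\max}(P)$; specializing to $\omega_k\equiv0$ gives $\Delta V_k\le-\|e_k\|^2<0$ for $e_k\neq0$, hence asymptotic stability of~\eqref{eq 8}. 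The observer gain is recovered from the change of variable $R^\top=PL$ as $L=P^{-1}R^\top$.

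I expect the only genuine subtlety — rather than a deep obstacle — to be the LPV step: one must check that $\mathbb{Z}\mathbb{H}\Phi$ truly enters~\eqref{eq LMI 2} affinely in the parameters $f_{ij}$ (which it does, since $\Phi$ in~\eqref{eq 13 Phi} is linear in the $f_{ij}$ and $\mathbb{Z},\mathbb{H}$ are constant), so that the uncountable family of constraints collapses to the vertex set, and one must keep all inequalities strict throughout so that the disturbance-free case delivers asymptotic, not merely Lyapunov, stability. Positivity of the multipliers $\mathbb{Z}$, $\mathbb{S}$, required both to invoke~\eqref{L 3.2} and to perform the Schur complements, is built into the hypotheses via the block constraints in~\eqref{eq 11 matrix Z} and~\eqref{eq 11 matrix S}.
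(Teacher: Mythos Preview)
Your proposal is correct and follows essentially the same route as the paper's own proof: apply the variant Young inequality~\eqref{L 3.2} to $\mathbf{NL}_1$ and $\mathbf{NL}_2$ with multipliers $\mathbb{Z}$ and $\mathbb{S}$, invoke the Schur complement to obtain the $3\times 3$ block inequality, and then use the convexity/LPV argument over the box $\mathcal{F}_{H_m}\times\mathcal{G}_{H_m}$. If anything, you are more explicit than the paper about why the vertex reduction is legitimate (affine dependence of $\mathbb{Z}\mathbb{H}\Phi$ and $\mathbb{S}\mathbb{G}\Psi$ on the scalar parameters) and about the telescoping step that delivers~\eqref{eq H criterion 1}.
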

\begin{proof}
The subsequent inequalities are achieved through the employment of the new variant of Young's inequality~\eqref{L 3.2} on the term $\mathbb{NL}_1$ and $\mathbb{NL}_2$:
\begin{equation}\label{thm 2 pf e 1}
    \mathbf{NL}_1 \leq 
(\mathbb{U}+\mathbb{Z}\mathbb{H}\Phi)^\top (2\mathbb{Z})^{-1}
(\mathbb{U}+\mathbb{Z}\mathbb{H}\Phi),
\end{equation}
and
\begin{equation}\label{thm 2 pf e 2}
     \mathbf{NL}_2 \leq 
(\mathbb{M}+\mathbb{S}\mathbb{G}\Psi)^\top (2\mathbb{S})^{-1}
(\mathbb{M}+\mathbb{S}\mathbb{G}\Psi),
\end{equation}
where $\mathbb{Z}>0$ and $\mathbb{S}>0$ are defined in~\eqref{eq 11 matrix Z} and~\eqref{eq 11 matrix S}, respectively.
\\From~\eqref{thm 2 pf e 1} and~\eqref{thm 2 pf e 2}, the inequality~\eqref{eq W_cal 5} is true if
\begin{equation}\label{thm 2 pf e 3}
    \Sigma+(\mathbb{U}+\mathbb{Z}\mathbb{H}\Phi)^\top (2\mathbb{Z})^{-1}
(\mathbb{U}+\mathbb{Z}\mathbb{H}\Phi)+(\mathbb{M}+\mathbb{S}\mathbb{G}\Psi)^\top (2\mathbb{S})^{-1}
(\mathbb{M}+\mathbb{S}\mathbb{G}\Psi)\leq 0.
\end{equation}

The inequalities~\eqref{eq 7 f_tilde} and~\eqref{eq 7 g_tilde} infer that each element $f_{ij}$ and $g_{ij}$ inside $\mathbb{V}$ and $\mathbb{N}$, respectively, are bounded and belongs to its respective convex sets, whose vertices are described in~\eqref{thm 2 eq FHM} and~\eqref{thm 2 eq GRM}, respectively. Thus, the condition specified in~\eqref{thm 2 pf e 3} is fulfilled if
\begin{equation}\label{thm 2 pf eq 4}
\begin{split}
\Sigma&+\Bigg[(\mathbb{U}+\mathbb{Z}\mathbb{H}\Phi)^\top (2\mathbb{Z})^{-1}
(\mathbb{U}+\mathbb{Z}\mathbb{H}\Phi)\Bigg]_{\Phi \in \mathcal{F}_{H_m}}\\&+\Bigg[(\mathbb{M}+\mathbb{S}\mathbb{G}\Psi)^\top (2\mathbb{S})^{-1}
(\mathbb{M}+\mathbb{S}\mathbb{G}\Psi)\Bigg]_{\Psi \in \mathcal{G}_{H_m}}\leq 0.
\end{split}
\end{equation}
The Schur's compliment of~\eqref{thm 2 pf eq 4} resulted in the LMI~\eqref{eq LMI 2}. From convexity principle proposed in~\cite{boyd1994linear}, the error dynamic~\eqref{eq 8} holds $\mathcal{H}_\infty$ criterion~\eqref{eq H criterion 1} if the LMI~\eqref{eq LMI 2} is solved for all
$\Phi\in\mathcal{F}_{{H}_m}$ and  $\Psi\in\mathcal{G}_{{H}_m}$.
\end{proof}
\section{Comments related to the proposed techniques}\label{sec 5 comment}
In this section, we have outlined some remarks related to the established methodology.
\subsection{Case of the nonlinear systems with linear outputs}
This segment focuses on the observer design for the nonlinear systems having linear outputs in the presence of disturbances/noise. The system~\eqref{eq 1} with linear output is reformulated as:
\begin{align}\label{sec 5 eq 1}
\begin{split}
     x_{k+1}  &= A x_k + G f(x_k)+B u_k+ E \omega_k, \\
     y_k       &= C x_k+D \omega_k, 
\end{split}
\end{align}
where all variables and parameters remain consistent with those specified in~\eqref{eq 1}. The nonlinear function $f(\cdot)$ is presumed to be globally Lipschitz and holds the detailed form~\eqref{eq 2 f}. Analogous to the previous Section~\ref{sec 3 prob statement}, the states of the system~\eqref{sec 5 eq 1} are estimated by deploying the ensuing observer:
\begin{equation}\label{sec5 eq 3}
\hat{x}_{k+1}=A\hat{x}_k+G f(\hat{x}_k)+B u_k+L\big(y_k-C\hat{x}_k\big),
\end{equation}
where all the parameters and variables are the same as the one illustrated in~\eqref{eq 3}. If one follows the steps~\eqref{eq 4}-~\eqref{eq 8} showcased in Section~\ref{sec 3 prob statement}, it is easy to obtain the subsequent error dynamics of the observer~\eqref{sec5 eq 3}:
\begin{equation}\label{sec5 eq 8}
\begin{split}
  e_{k+1}=(A-LC)e_k+ \sum_{i,j=1}^{m,\bar{n}}   f_{ij}G\mathcal{H}_{ij}F_i e_k+(E-LD)\omega_k.  
\end{split}
 \end{equation}
The following corollaries present two new LMI conditions which guarantee the $\mathcal{H}_\infty$ stability of the closed-loop system~\eqref{sec5 eq 8}.
\begin{corollary}
If there exist matrices $P > 0 \in \mathbf{S}^{n}$, $R \in \mathbf{S}^{p \times n}$,
along with $\mathbb{Z}$ in the form of~\eqref{eq 11 matrix Z} and a positive scalar $\mu$, such that, the ensuing optimization problem is solvable:
\begin{equation}\label{eq LMI 1.1}
    \begin{split}
\text{minimize}~\mu~\text{subject~to}&\\
        \begin{bmatrix}
           \Sigma & \mathbb{U}^\top &
    (\mathbb{Z}\mathbb{H}\Phi_m)^\top \\
           \star&-\mathbb{Z}&\mathbb{O}\\
\star&\star&-\mathbb{Z}
        \end{bmatrix}\leq 0,
    \end{split}
\end{equation}
where all variables and parameters are the same as the one described in the LMI~\eqref{eq LMI 1}. Then, the estimation error dynamic~\eqref{sec5 eq 8} satisfied $\mathcal{H}_\infty$ criterion~\eqref{eq H criterion 1}.
\end{corollary}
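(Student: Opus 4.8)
The plan is to mirror, almost verbatim, the derivation of Theorem~\ref{Theorem 1 IFAC LMI}, specialised to the situation in which the output carries no nonlinearity, so that the $g$-dependent blocks $\mathbb{M}$, $\mathbb{G}$, $\Psi$ and the multiplier $\mathbb{S}$ simply drop out. First I would fix the quadratic Lyapunov function $V(e_k)=e_k^\top P e_k$ with $P>0$ and, using the error dynamic~\eqref{sec5 eq 8}, introduce $\mathbb{A}=(A-LC)+\sum_{i,j=1}^{m,\bar n} f_{ij}G\mathcal{H}_{ij}F_i$ and $\mathbb{E}=E-LD$, so that $\Delta V_k = V(e_{k+1})-V(e_k)$ takes exactly the quadratic form appearing in~\eqref{sec2 del VK}. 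Imposing the $\mathcal{H}_\infty$ certificate $\mathcal{W}_k \triangleq \Delta V_k + \|e_k\|^2 - \mu\|\omega_k\|^2 \le 0$ and then applying Lemma~\ref{Lem 4} followed by Schur's lemma reduces the requirement to $\Sigma + \mathbf{NL}_1 \le 0$, where $\Sigma$ is the matrix in~\eqref{eq LMI 1 Sigma} with $R^\top = PL$, and $\mathbf{NL}_1 = \mathbb{U}^\top(\mathbb{H}\Phi) + \Phi^\top\mathbb{H}^\top\mathbb{U}$; the second nonlinear block $\mathbf{NL}_2$ is absent here precisely because there is no $g(\cdot)$ in~\eqref{sec 5 eq 1}.

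Next I would bound $\mathbf{NL}_1$ exactly as in the proof of Theorem~\ref{Theorem 1 IFAC LMI}: the standard Young inequality~\eqref{L 3.1} with weight $\mathbb{Z}>0$ gives $\mathbf{NL}_1 \le \mathbb{U}^\top\mathbb{Z}^{-1}\mathbb{U} + \Phi^\top\mathbb{H}^\top\mathbb{Z}\mathbb{H}\Phi$. Since each entry of $\Phi$ satisfies $0 \le f_{ij} \le f_{b_{ij}}$ by~\eqref{eq 7 f_tilde}, Lemma~\ref{Lem 3} applied to the pair $(\Phi,\Phi_m)$ yields $\Phi^\top\mathbb{H}^\top\mathbb{Z}\mathbb{H}\Phi \le \Phi_m^\top\mathbb{H}^\top\mathbb{Z}\mathbb{H}\Phi_m$, so a sufficient condition for $\mathcal{W}_k \le 0$ is $\Sigma + \mathbb{U}^\top\mathbb{Z}^{-1}\mathbb{U} + \Phi_m^\top\mathbb{H}^\top\mathbb{Z}\mathbb{H}\Phi_m \le 0$. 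A final application of Schur's lemma to the two $\mathbb{Z}$ blocks turns this into the LMI~\eqref{eq LMI 1.1}, and the change of variable $L = P^{-1}R^\top$ recovers the observer gain. Feasibility of~\eqref{eq LMI 1.1} therefore implies the condition~\eqref{eq W_cal 3}, hence the $\mathcal{H}_\infty$ bound~\eqref{eq H criterion 1}; in particular, at $\omega=0$ the same inequality forces a decrease of $V$ along the trajectories of~\eqref{sec5 eq 8}, which gives asymptotic stability.

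I do not expect a genuine obstacle, since the statement is a straightforward restriction of Theorem~\ref{Theorem 1 IFAC LMI} to a system without output nonlinearity. The only point requiring a little care is bookkeeping: one must check that deleting the $g$-terms leaves $\Sigma$, $\mathbb{U}$, $\mathbb{H}$ and $\Phi_m$ structurally unchanged — the disturbance-to-output matrix $D$ still enters $\Sigma$ through the block $E^\top P - D^\top R$, so the linear part of the argument is untouched — and that the reformulated Lipschitz decomposition~\eqref{eq 6 f_tilde} still applies verbatim to $f(x_k)-f(\hat x_k)$ in~\eqref{sec5 eq 8}. Once this is noted, the chain \emph{Young inequality $\to$ Lemma~\ref{Lem 3} $\to$ Schur complement} carries over word for word, so the proof reduces to ``repeat the proof of Theorem~\ref{Theorem 1 IFAC LMI} with $\mathbb{M}$, $\mathbb{G}$ and $\Psi$ set to zero.''
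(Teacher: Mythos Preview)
Your proposal is correct and matches the paper's own approach: the paper simply states that one should follow the proof of Theorem~\ref{Theorem 1 IFAC LMI}, which amounts precisely to dropping the $g$-dependent blocks $\mathbb{M}$, $\mathbb{G}$, $\Psi$, $\mathbb{S}$ and carrying the chain Young inequality $\to$ Lemma~\ref{Lem 3} $\to$ Schur complement through unchanged. Your bookkeeping remark about $D$ still entering $\Sigma$ is accurate and the argument goes through as you describe.
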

\begin{corollary}
Let us introduce the matrices $P > 0 \in \mathbf{S}^{n}$, $R \in \mathbf{S}^{p \times n}$, the matrix $\mathbb{Z}$ defined by~\eqref{eq 11 matrix Z}, a positive scalar $\mu$ and the following optimization problem:
\begin{equation}\label{eq LMI 2.1}
\begin{split}
\text{minimize}~\mu~\text{subject~to}&\\
    \begin{bmatrix}
    \Sigma & (\mathbb{U}+\mathbb{Z}\mathbb{H}\Phi)^\top\\
        \star &-2\mathbb{Z}
\end{bmatrix} &< 0,\,\,{ \forall \Phi\in\mathcal{F}_{{H}_m}},
\end{split}
\end{equation}
where all the terms and variables remain consistent with the one specified in the LMI~\eqref{eq LMI 2}. If the aforementioned optimization problem is solvable, then the estimation error dynamic~\eqref{sec5 eq 8} is $\mathcal{H}_\infty$ asymptotically stable.
\end{corollary}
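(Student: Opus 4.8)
The plan is to replay the argument of Theorem~\ref{Theorem 2 LPV LMI}, specialised to the linear-output error dynamic~\eqref{sec5 eq 8}, in which every term coming from $g(\cdot)$ is absent; concretely, the multiplier $\mathbb{S}$, the blocks $\mathbb{M}$, $\mathbb{N}$ and the variable $\Psi$ no longer appear, so that the sufficient matrix inequality reduces to $\Sigma+\mathbf{NL}_1\le 0$ instead of~\eqref{eq W_cal 5}.

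First I would take the quadratic Lyapunov candidate $V(e_k)=e_k^\top P e_k$ with $P>0$ and compute $\Delta V_k=V(e_{k+1})-V(e_k)$ along~\eqref{sec5 eq 8}. Writing $\mathbb{A}=(A-LC)+\sum_{i,j=1}^{m,\bar n} f_{ij}G\mathcal{H}_{ij}F_i$ and $\mathbb{E}=E-LD$ as in~\eqref{eq 8}, the $\mathcal{H}_\infty$ test $\mathcal{W}_k\triangleq \Delta V_k+\|e_k\|^2-\mu\|\omega_k\|^2\le 0$ leads, exactly as in the derivation~\eqref{eq W_cal 2}--\eqref{eq W_cal 4} (Lemma~\ref{Lem 4} to factor the quadratic terms, then one Schur complement, then the substitution $R^\top=PL$), to the sufficient condition $\Sigma+\mathbf{NL}_1\le 0$, where $\Sigma$ is the matrix of~\eqref{eq LMI 1 Sigma} and $\mathbf{NL}_1=\mathbb{U}^\top(\mathbb{H}\Phi)+\Phi^\top\mathbb{H}^\top\mathbb{U}$ collects the nonlinear contribution, with $\mathbb{U},\mathbb{H},\Phi$ as in~\eqref{eq 13 U}--\eqref{eq 13 Phi}. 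There is no $\mathbf{NL}_2$ here precisely because the measured output is linear.

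Next I would apply the variant of Young's inequality~\eqref{L 3.2} from Lemma~\ref{Lem 2} with $X=\mathbb{U}$, $Y=\mathbb{H}\Phi$, $Z=\mathbb{Z}>0$, giving $\mathbf{NL}_1\le(\mathbb{U}+\mathbb{Z}\mathbb{H}\Phi)^\top(2\mathbb{Z})^{-1}(\mathbb{U}+\mathbb{Z}\mathbb{H}\Phi)$, so that $\Sigma+(\mathbb{U}+\mathbb{Z}\mathbb{H}\Phi)^\top(2\mathbb{Z})^{-1}(\mathbb{U}+\mathbb{Z}\mathbb{H}\Phi)\le 0$ is sufficient for $\mathcal{W}_k\le 0$. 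Since by~\eqref{eq 7 f_tilde} each entry $f_{ij}$ of $\Phi$ lives in the interval $[0,f_{b_{ij}}]$ and the left-hand side is matrix-convex in $\Phi$ (a congruence of the constant positive definite $(2\mathbb{Z})^{-1}$ through an affine map of $\Phi$), its maximum over the hyper-rectangle is attained at a vertex; invoking the convexity principle of~\cite{boyd1994linear} it therefore suffices to impose the inequality for every $\Phi\in\mathcal{F}_{H_m}$. A final Schur complement on each vertex inequality yields exactly the LMI~\eqref{eq LMI 2.1}. Feasibility of~\eqref{eq LMI 2.1} thus returns $\mathcal{W}_k\le 0$; with $\omega_k=0$ this gives $\Delta V_k\le-\|e_k\|^2<0$ for $e_k\neq 0$, hence asymptotic stability, and summing $\mathcal{W}_k\le 0$ over $k$ yields the $\mathcal{H}_\infty$ bound~\eqref{eq H criterion 1}; the gain is recovered as $L=P^{-1}R^\top$.

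I do not expect a genuinely new obstacle: this is a bookkeeping specialisation of Theorem~\ref{Theorem 2 LPV LMI}. The one point requiring care is the convexity/vertex reduction --- one must verify that after the Young bound the dependence on $\Phi$ over the box $\mathcal{F}_{H_m}$ is jointly convex so that checking the finitely many vertices is sufficient --- together with the standing requirement that $\mathbb{Z}$, assembled in the pattern~\eqref{eq 11 matrix Z}, is positive definite, so that $(2\mathbb{Z})^{-1}$ exists and the Schur-complement steps are reversible.
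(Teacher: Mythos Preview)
Your proposal is correct and matches the paper's approach: the paper does not give a separate proof for this corollary but simply directs the reader to follow the proof of Theorem~\ref{Theorem 2 LPV LMI}, which is precisely the specialisation you carry out by dropping the $g(\cdot)$-related blocks $\mathbb{S},\mathbb{M},\mathbb{G},\Psi$ and retaining only the $\mathbf{NL}_1$ term. Your additional remark on matrix-convexity of the Young bound in $\Phi$ (justifying the vertex reduction) is a welcome clarification that the paper leaves implicit.
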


For the proof of both corollaries, one can follow the proof of Theorem~\ref{Theorem 1 IFAC LMI} and Theorem~\ref{Theorem 2 LPV LMI}.

\subsection{Case of absence of exogenous input}
At $\omega=0$, the inequality~\eqref{eq H criterion 2} is reformulated as: $$\Delta V_k+ ||e_k||^2 \leq 0.$$ It yields the exponential stability condition $$\Delta V_k \leq -\sigma V(e_k),$$ along with $\sigma=\frac{1}{\lambda_{\max}(P)}>0$. Hence, the proposed LMIs ensure the exponential stability of the error dynamic~\eqref{eq 8} when $\omega=0$.

\section{Illustrative examples}\label{sec 6 num examples}
This section is dedicated to the analysis of the established LMI-based observer methodology.
The first part of this segment emphasises the superiority of the proposed LMI approach through a numerical example. Later on, the performance of the observer is demonstrated by applying it to SoC estimation in Li-ion batteries.
\subsection{Numerical example 1}
Let us consider a nonlinear system represented in the form~\eqref{sec 5 eq 1} with the ensuing parameters:
$A=\begin{bmatrix}
    0 & 1& 0\\0&  -1& 1\\0 & -1 &1
\end{bmatrix}$, $G=\begin{bmatrix}
    1 & 0\\0&0\\  0& 1
\end{bmatrix}$, $C=\begin{bmatrix}
    1 &0&0\\ 0&0&1
\end{bmatrix}$, $B=E=\begin{bmatrix}
    1 \\1\\ 1
\end{bmatrix}$ and $D=\begin{bmatrix}
    1 \\-0.1
\end{bmatrix}$. In addition to this, $f(x)=\begin{bmatrix}
    \sin (\theta_1 x_1)\\\cos (\theta_2 x_2)
\end{bmatrix}$ along with $H_1=\begin{bmatrix}
    1& 0 &-1\\-1& 0 &1\\1& 0& 0
\end{bmatrix}$ and $H_2=\begin{bmatrix}
    0 &-1& 1\\1& 1& 0\\-1& 0& 0
\end{bmatrix}$. Hence, we get $m=2$ and $\bar{n}=3$.

Further, the developed LMI conditions~\eqref{eq LMI 1.1} and~\eqref{eq LMI 2.1} are solved using MATLAB toolbox to compute the observer parameter $L$ and the optimal noise attenuation index $\sqrt{\mu}$ for different values of $\theta_1$ and $\theta_2$. The obtained minimised value $\sqrt{\mu}$ is outlined in Table~\ref{tab ex 1}. 
The proposed LMI provides a better noise attenuation level than the one obtained from~\cite[LMI~(50)]{Khadilja_2019_rnc} and~\cite[LMI~(45)]{gasmi_2016} for all considered values of $\theta_1$ and $\theta_2$. It implies that the noise mitigation achieved by the developed LMI-based observer is more efficient than the existing methods. Hence, it emphasizes that the established LMI condition gives better results than the LMIs of~\cite{Khadilja_2019_rnc} and~\cite{gasmi_2016}. 
Therefore, the effectiveness of the derived LMI condition is highlighted through this comparison.
\begin{table}[!ht]
\centering
\caption{Portraying the optimal values of $\sqrt{\mu}$ in several cases}
\label{tab ex 1}
\renewcommand{\arraystretch}{1.1}
\begin{tabular}{|c|c|c|c|c|c|}
\hline
\begin{tabular}[c]{@{}c@{}}LMI\\ approaches\end{tabular}  &\begin{tabular}[c]{@{}c@{}}$\theta_1=0.1$\\ $\theta_2=0.5$\end{tabular}&\begin{tabular}[c]{@{}c@{}}$\theta_1=0.2$\\ $\theta_2=0.1$\end{tabular}&\begin{tabular}[c]{@{}c@{}}$\theta_1=0.2$\\ $\theta_2=0.4$\end{tabular}&\begin{tabular}[c]{@{}c@{}}$\theta_1=0.4$\\ $\theta_2=0.1$\end{tabular}&\begin{tabular}[c]{@{}c@{}}$\theta_1=0.5$\\ $\theta_2=0.5$\end{tabular}\\
\hline
    LMI~\eqref{eq LMI 1.1}&
    \begin{tabular}[c]{@{}c@{}}\textcolor{plum}{$3.4167$}\\ \textcolor{plum}{$\times 10^{-6}$}\end{tabular}&
    \begin{tabular}[c]{@{}c@{}}\textcolor{plum}{$1.9600$}\\ \textcolor{plum}{$\times 10^{-6}$}\end{tabular}& 
    \begin{tabular}[c]{@{}c@{}}\textcolor{plum}{$1.1503$}\\ \textcolor{plum}{$\times 10^{-6}$}\end{tabular}
    &
    {$1.7356$}&{inf} \\ \hline
    LMI~\eqref{eq LMI 2.1}&
     \begin{tabular}[c]{@{}c@{}}\textcolor{mypink}{$1.2145$}\\ \textcolor{mypink}{$\times 10^{-6}$}\end{tabular}
  &\begin{tabular}[c]{@{}c@{}}\textcolor{mypink}{$1.0941$}\\ \textcolor{mypink}{$\times 10^{-6}$}\end{tabular} & 
  \begin{tabular}[c]{@{}c@{}}\textcolor{mypink}{$3.2263$}\\ \textcolor{mypink}{$\times 10^{-8}$}\end{tabular} &
   \begin{tabular}[c]{@{}c@{}}\textcolor{mypink}{$1.9462$}\\ \textcolor{mypink}{$\times 10^{-6}$}\end{tabular} 
&\begin{tabular}[c]{@{}c@{}}\textcolor{mypink}{$2.1295$}\\ \textcolor{mypink}{$\times 10^{-6}$}\end{tabular}\\ \hline
    ~\cite[LMI~(50)]{Khadilja_2019_rnc}  & 0.0393  &0.0994& 0.1438&\textcolor{plum}{0.4885}&\textcolor{plum}{2.4415}
     \\ \hline
      ~\cite[LMI~(45)]{gasmi_2016}  & 2.1801  &2.1149 & 2.3706& 2.4415 &4.4248\\\hline
\end{tabular}
\end{table}
In the sequel, the validation of observer performance is showcased.

\subsection{Application: SoC estimation of Li-ion batteries} 

In this segment, the authors have demonstrated the effectiveness of the established observer methodology through the deployment of the developed observer for the State-of-Charge (SoC) estimation of the Li-ion battery model. Let us consider the subsequent $2^{\text{nd}}$ order equivalent circuit model proposed in~\cite{2017_SoC_model_paramters}:
\begin{equation}\label{sec 4 battery model}
    \begin{split}
         \dot{V}_1&=\frac{1}{R_1 C_1} V_1+\frac{1}{C_1} I,\\
        \dot{V}_2&=\frac{1}{R_2 C_2} V_2+\frac{1}{C_2} I,\\
        \dot{s}&=-\frac{1}{C_n} s,\\
        V_t&=OCV(s)-V_1-V_2+R_0 I
    \end{split}
\end{equation}
where $V_1$ and $V_2$ represent voltages across polarisation resistances $R_1$ and $R_2$, respectively, whereas $s$ indicates the SoC of the battery. $I$ infers the current
flowing through the load. The terms $C_1$ and $C_2$ denote polarisation capacitors, however, $C_n$ depict the capacity of the battery. The terminal voltage ($V_t$) is considered as the output of the model. The term $OCV(s)$ depicts the open circuit voltage (OCV) of the battery, and it is illustrated as: 
\begin{equation}\label{SOC OCV relation}
    OCV(s)=0.9206\cdot s^3  - 1.3781\cdot s^2+1.3905\cdot s+3.2416.
\end{equation}
The details of the remaining parameters of the model are as follows:
\begin{itemize}
    \item Battery capacity: $C_n=5~\si{A\cdot hour}$; Battery resistance: $R_0=0.0314~\si{\Omega}$; Sampling time: $T_s=0.02~\si{hour}$;
    \item Polarisation resistances: $R_1=0.0181~\si{\Omega},\,\,R_2=0.0281~\si{\Omega}$;
    \item Polarisation capacitors: $C_1=1712~\si{F},\,\,C_2=55257~\si{F}$;
\end{itemize}
Through the utilization of the Euler forward method, the system~\eqref{sec 4 battery model} is transformed into the discrete-time model represented~\eqref{eq 1}, whose parameters are showcased as: 
$x(k)=\begin{bmatrix}
    V_1(k)\\V_2(k)\\ s(k)
\end{bmatrix},\, A=\begin{bmatrix}
    1-\frac{T_s}{R_1 C_1} & 0&0\\
    0&1-\frac{T_s}{R_2 C_2}&0\\
    0&0&1
\end{bmatrix},\,\,B_1=\begin{bmatrix}
    \frac{T_s}{C_1} \\
    \frac{T_s}{C_2}\\
    -\frac{T_s}{C_n}
\end{bmatrix},\,\,B_2=-R_0,\,\,F=1,\,\,g(x(k))=OCV (x_3(k))$, $E=B_1$ and $D=1$. Let us assume that the system dynamics and outputs are corrupted with the Gaussian noise $\omega \leadsto (0,0.1)$.
\\Since $0\leq x_3(k) \leq 1$, it is easy to infer that the partial derivative of $g(x)$ satisfies~\eqref{eq 7 g_tilde}. 

The developed LMI~\eqref{eq LMI 2} is solved by using YALMIP toolbox~\cite{YALMIP_ref_1393890}, and we obtain:
$$\sqrt{\mu}=4.0907\times 10^{-6}~\text{and}~L=\begin{bmatrix}
    -0.0005\\
   -0.0000\\
    0.1769
\end{bmatrix}.$$ 

\begin{figure}
    \centering
\includegraphics[width=\linewidth]{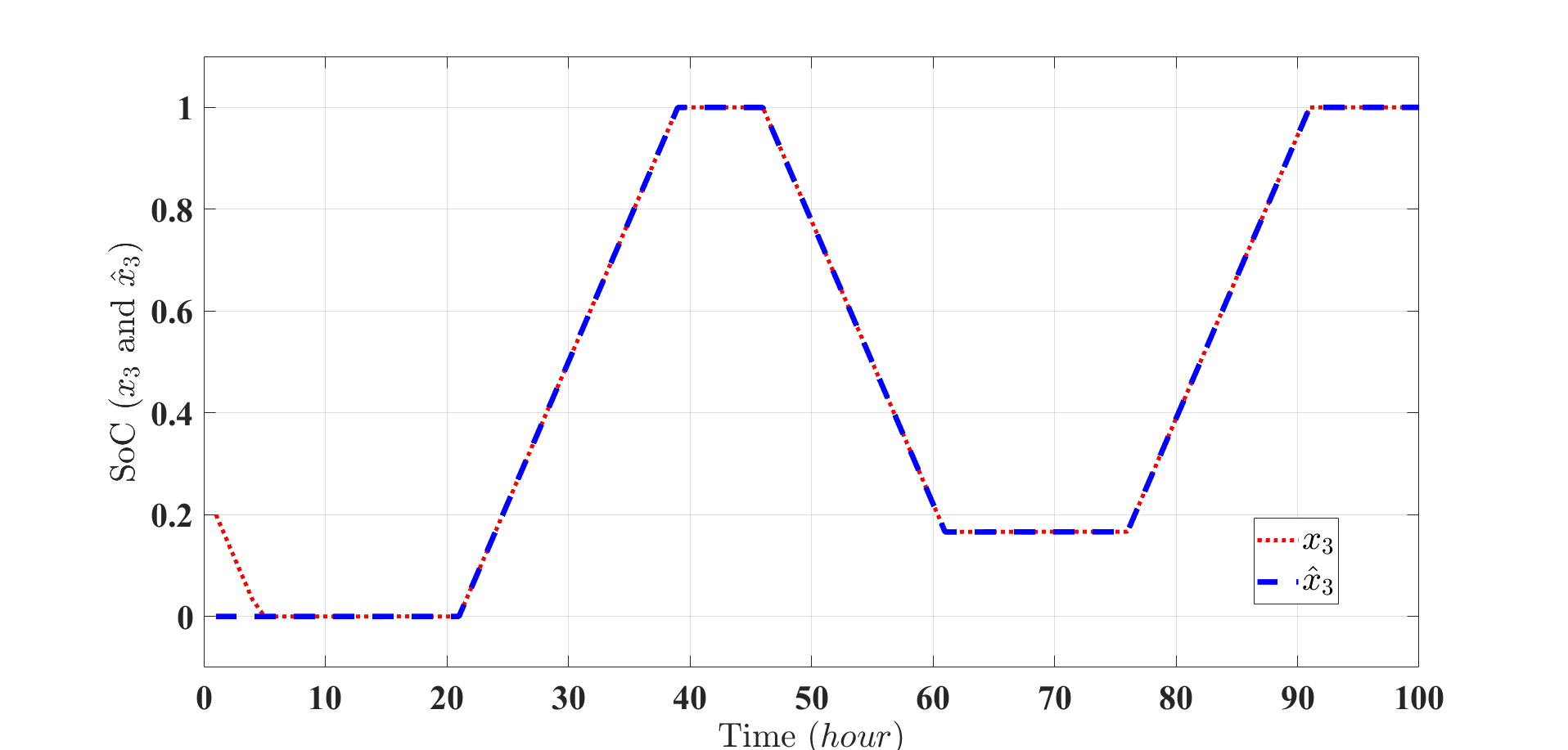}
    \caption{Graph of estimated and actual SoC}
    \label{fig 1 SOC}
\end{figure}

By using the aforementioned matrix $L$, the proposed observer~\eqref{eq 3} is implemented in MATLAB environment for the state estimation purpose. The plot of estimated SoC and actual SoC is shown in Figure~\ref{fig 1 SOC}. It shows that the developed observer performs the efficient estimation of SoC along with optimal noise attenuation. Further, for the same task, the extended Kalman filter (EKF) methodology of~\cite{EKF_result} is implemented in MATLAB. The RMSE values of the estimation error obtained from EKF and the established observer are summarised in Table~\ref{tab ex 2}. It emphasises that the accuracy of the SoC obtained from the proposed approach is relatively better than the one achieved from EKF. Thus, the effectiveness of the new LMI-based observer over the well-known EKF technique is highlighted.
\begin{table}[!ht]
\centering
\caption{Comparison of RMSE values of the estimation error}
\label{tab ex 2}
\renewcommand{\arraystretch}{1.3}
\begin{tabular}{|c|c|c|c|}
\hline
Methodology  &$\tilde{x}_1$&$\tilde{x}_2$&$\tilde{x}_3$ (\textcolor{mygreen}{SoC})\\ \hline
   Proposed observer~\eqref{eq 3} with LMI~\eqref{eq LMI 2}&\textcolor{myred1}{$7.71\times 10^{-4}$}  &\textcolor{myred1}{$1.97\times 10^{-7}$}  & \textcolor{myred1}{0.0014} \\ \hline
    EKF approach~\cite{EKF_result}  & {$7.78\times 10^{-4}$}  &{$7.92\times 10^{-4}$}  & {0.0028}
     \\ \hline
\end{tabular}
\end{table}   
\section{Conclusion}\label{sec 7 conclusion}

In this paper, the problem of nonlinear observer design for discrete-time systems is addressed. It is tackled by formulating two new LMI conditions which provide the observer gain and ensure the $\mathcal{H}_\infty$ stability of the estimation error of the proposed observer. The established matrix-multiplier-based LMIs are derived by incorporating the reformulated Lipschitz property, a new variant of Young inequality. 
The obtained LMIs encompass some additional decision variables as compared to the existing ones due to the deliberate use of matrix multipliers and a new variant of Young inequality. It resulted in an improvement in LMI feasibility.
Thus, the introduction of generalized matrix multipliers inside LMIs plays a vital role in their enhancement. Further, the performance of the observer and the efficacy of the LMI are demonstrated by using numerical examples. From a future perspective, the authors plan to implement the proposed strategy for the stabilization of the same class of systems used in this article.

\bibliographystyle{elsarticle-num-names}
 \bibliography{cite}
\end{document}